\documentclass[10pt,oneside,a4paper]{article}
\usepackage{mathrsfs}
\usepackage{mathdots}
\usepackage{amsmath}
\usepackage{amsthm}
\usepackage{bbm}
\usepackage{subfloat}
\usepackage{subfigure}
\usepackage{fancyhdr,graphicx}
\usepackage{amsfonts}
\usepackage{amssymb}
\usepackage{latexsym,bm}
\usepackage{newlfont}
\usepackage{latexsym}
\usepackage{caption2}
\usepackage{algorithmic,algorithm}
\usepackage{multirow}
\usepackage[pagewise]{lineno}
\usepackage{slashbox}
\usepackage{booktabs}
\usepackage{float}
\usepackage{array}
\usepackage{tabularx}
\textwidth 165 mm \textheight 250 mm \hoffset -2.0cm \voffset-3.0cm
\newtheorem{definition}{Definition}
\newtheorem{theorem}{Theorem}
\newtheorem{lemma}[theorem]{\quad Lemma}
\newtheorem{example}{Example}
\newtheorem{remark}{Remark}

\numberwithin{figure}{section} \numberwithin{equation}{section}


\makeatletter
\long\def\@makecaption#1#2{%
 \vskip\abovecaptionskip
  \sbox\@tempboxa{{#1.}\quad #2}%
 \ifdim \wd\@tempboxa >\hsize
    { #1.}\quad #2\par
     \else
  \global \@minipagefalse
   \hb@xt@\hsize{\hfil\box\@tempboxa\hfil}%
   \fi
   \vskip\belowcaptionskip}
\makeatother


\setlength{\baselineskip}{17pt}

\title{ \textbf{High order schemes for the tempered fractional diffusion equations
}}

\author{Can Li$^{a,b,}$\footnote{E-mail addresses: mathlican@xaut.edu.cn.},~   Weihua Deng$^{c,}$\footnote{E-mail addresses: dengwh@lzu.edu.cn.}\\
\small{$^a$ Department of Applied Mathematics, School of Sciences,  Xi'an University of Technology, }\\
        \small{ Xi'an, Shaanxi 710054, P.R. China.}\\
\small{$^{b}$ Beijing Computational Science Research Center, Beijing 100084, P.R. China.}\\
\small{$^{c}$ School of Mathematics and Statistics,
Gansu Key Laboratory of Applied Mathematics and Complex Systems,}\\
\small{ Lanzhou University, Lanzhou 730000, P.R. China.}\\
        \vspace{-0.1cm}
        }

\date{}
\begin{document}
\maketitle \makeatletter
\newcommand{\rmnum}[1]{\romannumeral #1} 　　
\newcommand{\Rmnum}[1]{\expandafter\@slowromancap\romannumeral #1@}
\makeatother \vspace{-1cm}
\begin{abstract}
L\'{e}vy flight models whose jumps have infinite moments are mathematically used to describe the superdiffusion
in complex systems. Exponentially tempering L\'{e}vy measure of L\'{e}vy flights leads to the tempered stable L\'{e}vy processes which combine both the $\alpha$-stable and Gaussian trends; and the very large jumps are unlikely and all their moments exist. The probability density functions of the tempered stable L\'{e}vy processes solve the tempered fractional diffusion equation. This paper focuses on designing the high order difference schemes for the tempered fractional diffusion equation on bounded domain. The high order difference approximations, called the tempered and weighted and shifted Gr\"{u}nwald difference (tempered-WSGD) operators, in space are obtained by using the properties of the tempered fractional calculus and weighting and shifting their first order Gr\"{u}nwald type difference approximations. And the Crank-Nicolson discretization is used in the time direction. The stability and convergence of the presented numerical schemes are established; and the numerical experiments are performed to confirm the theoretical results and testify the effectiveness of the schemes.

\end{abstract}

\textbf{Mathematics Subject Classification (2010)}: 26A33, 65M12, 65M06

\textbf{Key words:}\quad tempered fractional calculus, tempered-WSGD
operator, superconvergent, stability and convergence

\section{Introduction}
\setlength{\unitlength}{1cm}
The probability density function of L\'{e}vy flights \cite{Metzler:04, Meerschaert:09} has a characteristic
function $e^{-D_{\alpha}|k|^{\alpha}t}\,(0<\alpha<2)$ of stretched Gaussian form, causing the asymptotic decay as
$ |x|^{-1-\alpha}$. It produces that the second moment diverges, i.e., $\langle x^{2}(t)\rangle=\infty$. The divergent
second moments may not be feasible for some even non-Brownian physical processes of practical interest which take place in bounded domains and involve observables with finite moments.
To overcome the divergence of the variance, many techniques are adopted. By simply discarding the very large jumps, Mantegna and Stanley \cite{Mantegna:94} introduce the truncated L\'{e}vy flights and show that the obtained stochastic process ultraslowly converges to a Gaussian. From the point of view of an experimental study, because of the limited time, no expected Gaussian behavior can be observed.
Two other modifications to achieve finite second
moments are proposed by Sokolov et al \cite{Sokolov:89}, who add a high-order power-law factor, and Chechkin et al. \cite{Chechkin:05}, who
add a nonlinear friction term. Exponentially tempering the probability of large jumps of L\'{e}vy flights, i.e., making the L\'{e}vy density decay as $ |x|^{-1-\alpha}\mathrm{e}^{-\lambda |x|}$ with $\lambda>0$, to get finite moments seems to be the most popular one; and the corresponding tempered fractional differential equations are derived \cite{Baeumera:10, Cartea:07a, Cartea:07, Cartea:09}. In fact, the power-law waiting time can also be exponentially tempered \cite{Meerschaert:09}. This paper focused on providing high order schemes for numerically solving tempered fractional diffusion equations, which involve tempered fractional derivatives. In fact, the tempered fractional integral has a long history.
Buschman's earlier work \cite{Buschman:72} reports the fractional integration with weak
singular and exponential kernels; for more detailed discussions, see  Srivastava and  Buschman's book \cite{Srivastava:77} and the references therein. The definitions of tempered fractional calculus are much similar to the ones of fractional substantial calculus \cite{Carmi:10}; but they are introduced from completely different physical backgrounds; e.g., fractional substantial calculus is used to characterize the functional distribution of anomalous diffusion \cite{Carmi:10}. Mathematically the fractional substantial calculus is time-space coupled operator but the tempered fractional calculus is not; numerically the fractional substantial calculus is discretized in the time direction \cite{Deng:14}, but here the tempered fractional calculus is treated as space operator.

Tempered fractional calculus is the generalization of fractional calculus, or fractional calculus is the special/limiting case of tempered fractional calculus.  Some important progresses have been made for numerically solving the fractional partial differential equations (PDEs), e.g., the finite difference methods are used to simulate the space fractional advection diffusion equations \cite{Lynch:03,Meerschaert:04,Liu:04}. Recently, it seems that more efforts of the researchers are put on the high order schemes and fast algorithms. Based on the Toeplitz-like structure of the matrix corresponding the finite difference methods of fractional PDEs, Wang et al \cite{wang:10} numerically solve the fractional diffusion equations with the $N\log ^2N$ computational cost. Later, Pang and Sun \cite{Pang:12} propose a multigrid method to solve the discretized system of the fractional diffusion equation. By introducing the linear spline approximation, Sousa and Li present a second
order discretization for the Riemann-Liouville fractional derivatives, and establish an
unconditionally stable weighted finite difference method for the one-dimensional
fractional diffusion equation in \cite{Sousaa:11}. Ortigueira \cite{Ortigueira:06} gives the ``fractional centred derivative"
to approximate the Riesz fractional derivative with second order accuracy; and this
method is used by \c{C}elik and Duman in \cite{Celik:11} to approximate fractional
diffusion equation with the Riesz fractional derivative in a finite domain.
More recently, by weighting and shifting the Gr\"{u}nwald discretizations, Tian et al \cite{Tian:11} propose a class of second order difference approximations, called WSGD operators, to the Riemann-Liouville fractional derivatives.

So far, there are limited works addressing the finite difference schemes for the tempered fractional
diffusion equations. Baeumera and Meerschaert \cite{Baeumera:10} provide finite difference and particle tracking methods for solving the tempered fractional diffusion equation with the second order accuracy. The stability and convergence of the provided schemes are discussed.
Cartea and del-Castillo-Negrete \cite{Cartea:07a} derive a general finite difference scheme to numerically solve a Black-Merton-Scholes model with tempered fractional derivatives. Recently, Marom and Momoniat \cite{Momoniat:09} compare the numerical solutions of three kinds of fractional Black-Merton-Scholes equations with tempered fractional derivatives. And the
stability and convergence of the presented schemes are not given. To the best of our knowledge, there is no published work to the high order difference schemes for the tempered fractional diffusion equation.  In this paper, with the similar method presented in \cite{Meerschaert:04,Baeumera:10}, we first propose the first order shifted Gr\"{u}nwald type approximation for the tempered fractional calculus; then motivated by the idea in \cite{Tian:11}, we design a series of high order schemes, called the tempered-WSGD operators, by weighting and shifting the first order Gr\"{u}nwald type approximations to the tempered fractional calculus. The obtained high order schemes are applied to solve the tempered fractional diffusion equation and the Crank-Nicolson discretization is used in the time direction. The unconditionally numerical stability and convergence are detailedly discussed; and the corresponding numerical experiments are carried out to illustrate the effectiveness of the schemes.

The remainder of the paper is organized as follows. In Sec. \ref{sec:2}, we introduce the definitions of the tempered fractional calculus and derive their first order shifted Gr\"{u}nwald type approximations and the high order discretizations, the tempered-WSGD operators. In Sec. \ref{sec:3}, the tempered fractional diffusion equation is numerically solved by using the tempered-WSGD operators to approximate the space derivative and the Crank-Nicolson discretization to the time derivative; and the numerical stability and convergence are discussed.  The effectiveness and convergence orders of the presented schemes are numerically verified in Sec. \ref{sec:4}. And the concluding remarks are given in the last section.

\newpage
\section{Definitions of the tempered fractional calculus and the derivation of the tempered-WSGD operators}\label{sec:2}
We first introduce the definitions of the tempered fractional integral and derivative then focus on deriving their high order discretizations, the tempered-WSGD operators.

\subsection{Definitions and Fourier transforms of the tempered fractional calculus}
We introduce the definitions of the tempered fractional calculus and perform their Fourier transforms.

\begin{definition}[\cite{Buschman:72,Cartea:07}]\label{Def1}
Let $u(x)$ be piecewise continuous on $(a,\infty)$ $($or $(-\infty,b)$ corresponding to the right integral$)$ and integrable on any finite subinterval of $[a,\infty)$  $($or $(-\infty,b]$ corresponding to the right integral$)$,  $\sigma>0$, $\lambda\geq0$. Then
\begin{itemize}
  \item[\textrm{(1)}]the left Riemann-Liouville tempered fractional integral of order $\sigma$ is defined to be
    \begin{equation*}
      _{a}D_x^{-\sigma,\lambda}u(x)=\frac{1}{\Gamma(\sigma)}\int_{a}^xe^{-\lambda( x-\xi)}(x-\xi)^{\sigma-1}u(\xi)\mathrm{d}\xi;
    \end{equation*}
  \item[\textrm{(2)}]the right Riemann-Liouville tempered fractional integral of order $\sigma$ is defined to be
    \begin{equation*}
      _xD_{b}^{-\sigma,\lambda}u(x)=\frac{1}{\Gamma(\sigma)}\int_x^b e^{-\lambda( \xi-x)}(\xi-x)^{\sigma-1}u(\xi)\mathrm{d}\xi.
    \end{equation*}
\end{itemize}
\end{definition}
\begin{definition}[\cite{Oldham:74,Samko:93,Podlubny:99}]\label{Def2}
 For $\alpha \in (n-1,n),n\in \mathbb{N}^{+}$, let $u(x)$ be $(n-1)$-times continuously differentiable on $(a,\infty)$  $($or $(-\infty,b)$ corresponding to the right derivative$)$ and its $n$-times derivative be integrable on any subinterval of $[a,\infty)$ $($or $(-\infty,b]$ corresponding to the right derivative$)$. Then
\begin{itemize}
  \item[\textrm{(1)}] the left Riemann-Liouville fractional derivative:
    \begin{equation*}
      _{a}D_x^{\alpha}u(x)=\frac{1}{\Gamma(n-\alpha)}\frac{\mathrm{d}^n}{\mathrm{d}
          x^n}\int_{a}^x\frac{u(\xi)}{(x-\xi)^{\alpha-n+1}}\mathrm{d}\xi;
    \end{equation*}
  \item[\textrm{(2)}] the right Riemann-Liouville fractional derivative:
    \begin{equation*}
      _xD_{b}^{\alpha}u(x)=\frac{(-1)^n}{\Gamma(n-\alpha)}\frac{\mathrm{d}^n}{\mathrm{d}
          x^n}\int_x^b\frac{u(\xi)}{(\xi-x)^{\alpha-n+1}}\mathrm{d}\xi.
    \end{equation*}
\end{itemize}
\end{definition}
\begin{definition}[\cite{Baeumera:10,Cartea:07}]\label{def33}
For $\alpha \in (n-1,n),n\in \mathbb{N}^{+}$, let $u(x)$ be $(n-1)$-times continuously differentiable on $(a,\infty)$  $($or $(-\infty,b)$ corresponding to the right derivative$)$ and its $n$-times derivative be integrable on any subinterval of $[a,\infty)$  $($or $(-\infty,b]$ corresponding to the right derivative$)$, $\lambda\ge0$. Then
\begin{itemize}
  \item[\textrm{(1)}] the left Riemann-Liouville tempered  fractional derivative:
    \begin{equation*}
      _{a}D_x^{\alpha,\lambda}u(x)
      =e^{-\lambda x}{{_{a}}D_x^{\alpha}}\big( e^{\lambda x}u(x)\big)=\frac{e^{-\lambda x}}{\Gamma(n-\alpha)}\frac{\mathrm{d}^n}{\mathrm{d}
          x^n}\int_{a}^x\frac{e^{\lambda \xi}u(\xi)}{(x-\xi)^{\alpha-n+1}}\mathrm{d}\xi;
    \end{equation*}
  \item[\textrm{(2)}] the right Riemann-Liouville tempered  fractional derivative:
    \begin{equation*}
      _xD_{b}^{\alpha,\lambda}u(x)
      =e^{\lambda x}{{_x}D_{b}^{\alpha}}\big( e^{-\lambda x}u(x)\big)=\frac{(-1)^ne^{\lambda x}}{\Gamma(n-\alpha)}\frac{\mathrm{d}^n}{\mathrm{d}
          x^n}\int_x^{b}\frac{e^{-\lambda \xi}u(\xi)}{(\xi-x)^{\alpha-n+1}}\mathrm{d}\xi.
    \end{equation*}
\end{itemize}
If $\lambda=0$, then the left and right Riemann-Liouville tempered fractional derivatives
 $_{a}D_x^{\alpha,\lambda}u(x)$ and $_xD_{b}^{\alpha,\lambda}u(x)$ reduce to
 the left and right Riemann-Liouville fractional derivatives $_{a}D_x^{\alpha}u(x)$
 and $_xD_{b}^{\alpha}u(x)$ defined in Definition \ref{Def2}.
\end{definition}
\begin{definition}\label{def44}
The variants of the left and right Riemann-Liouville tempered fractional derivatives are defined as \cite{Baeumera:10,Cartea:07,Meerschaert:12}
\begin{equation}\label{Lrl}
_{a}\mathbf{D}_x^{\alpha,\lambda}u(x)=
\begin{cases}
\displaystyle
_{a}D_x^{\alpha,\lambda}u(x)-\lambda^{\alpha}u(x),& \text{$0<\alpha<1$},\\
\displaystyle
_{a}D_x^{\alpha,\lambda}u(x)-\alpha\lambda^{\alpha-1}\partial_{x}u(x)-\lambda^{\alpha}u(x),& \text{$1<\alpha<2$};
\end{cases}
\end{equation}
and
\begin{equation}\label{Rrl}
_x\mathbf{D}_{b}^{\alpha,\lambda}u(x)=
\begin{cases}
\displaystyle
_xD_{b}^{\alpha,\lambda}u(x)-\lambda^{\alpha}u(x),& \text{$0<\alpha<1$},\\
\displaystyle
_xD_{b}^{\alpha,\lambda}u(x)+\alpha\lambda^{\alpha-1}\partial_{x}u(x)-\lambda^{\alpha}u(x),& \text{$1<\alpha<2$},
\end{cases}
\end{equation}
where $\partial_{x}$ denotes the classic first order derivative $\frac{\partial}{\partial x}$.
\end{definition}
\begin{remark}
In the above definitions, the `a' can be extended to `$-\infty$' and `b' to `$+\infty$'. In the following analysis, we assume that $u(x)$ is defined on $[a,b]$ and whenever necessary $u(x)$ can be smoothly zero extended to $(-\infty, b)$ or $(a,+\infty)$ or even $(-\infty,+\infty)$. Then  ${_{-\infty}}D_x^{-\sigma,\lambda}u(x)={_{a}}D_x^{-\sigma,\lambda}u(x)$;
${_x}D_{+\infty}^{-\sigma,\lambda}u(x)={_x}D_{b}^{-\sigma,\lambda}u(x)$; ${_{-\infty}}D_x^{\alpha,\lambda}u(x)={_{a}}D_x^{\alpha,\lambda}u(x)$; and ${_x}D_{+\infty}^{\alpha,\lambda}u(x)={_x}D_{b}^{\alpha,\lambda}u(x)$.
\end{remark}
\begin{lemma}[\cite{Buschman:72,Srivastava:77,Baeumera:10}]\label{lem0a}
  Let $u(x)$ and its $n$-times derivative belong to $L^q(\mathbb{R})$, $q \geq 1$.
 Then the Fourier transforms of the left and right Riemann-Liouville tempered fractional integrals are
  \begin{equation}\label{ffb}
     \mathcal{F}({_{-\infty}}D_x^{-\sigma,\lambda}u(x))=(\lambda+i\omega)^{-\sigma}\hat{u}(\omega);
  \end{equation}
  and
  \begin{equation}\label{ffc}
     \mathcal{F}({_x}D_{+\infty}^{-\sigma,\lambda}u(x))=(\lambda-i\omega)^{-\sigma}\hat{u}(\omega)
  \end{equation}
and the Fourier transforms of the left and right Riemann-Liouville tempered fractional derivatives are
  \begin{equation}\label{ffbv}
     \mathcal{F}({_{-\infty}}D_x^{\alpha,\lambda}u(x))=(\lambda+i\omega)^{\alpha}\hat{u}(\omega);
  \end{equation}
  and
  \begin{equation}\label{ffcv}
     \mathcal{F}({_x}D_{+\infty}^{\alpha,\lambda}u(x))=(\lambda-i\omega)^{\alpha}\hat{u}(\omega)
  \end{equation}
and the Fourier transforms of the variants of the left and right Riemann-Liouville tempered fractional derivatives give
\begin{equation}\label{Lrl}
\mathcal{F}(_{-\infty}\mathbf{D}_x^{\alpha,\lambda}u(x))=
\begin{cases}
\displaystyle
(\lambda+i\omega)^{\alpha}\hat{u}(\omega)-\lambda^{\alpha}\hat{u}(\omega),& \text{$0<\alpha<1$},\\
\displaystyle
(\lambda+i\omega)^{\alpha}\hat{u}(\omega)-\alpha i\omega\lambda^{\alpha-1}\hat{u}(\omega)-\lambda^{\alpha}\hat{u}(\omega),& \text{$1<\alpha<2$};
\end{cases}
\end{equation}
and
\begin{equation}\label{Rrl}
\mathcal{F}(_x\mathbf{D}_{+\infty}^{\alpha,\lambda}u(x))=
\begin{cases}
\displaystyle
(\lambda-i\omega)^{\alpha}\hat{u}(\omega)-\lambda^{\alpha}\hat{u}(\omega),& \text{$0<\alpha<1$},\\
\displaystyle
(\lambda-i\omega)^{\alpha}\hat{u}(\omega)+\alpha i\omega\lambda^{\alpha-1}\hat{u}(\omega)-\lambda^{\alpha}\hat{u}(\omega),& \text{$1<\alpha<2$},
\end{cases}
\end{equation}
where the Fourier transform of $u$ is defined by
  \begin{equation*}
   \mathcal{F}(u(x))(\omega)=\int_{\mathbb{R}}\mathrm{e}^{-i\omega x}u(x)\mathrm{d}x,\,i^2=-1.
  \end{equation*}
\end{lemma}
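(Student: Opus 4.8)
The plan is to split the six transform identities into three groups and reduce each group to the previous one: first the two integral formulas \eqref{ffb}--\eqref{ffc}, then the two derivative formulas \eqref{ffbv}--\eqref{ffcv}, and finally the two variant formulas, which will fall out by linearity. Throughout I work with the base points sent to $\mp\infty$ as sanctioned by the Remark. The key structural observation is that every tempered fractional integral is a \emph{convolution}: setting
\[
k_+(t)=\frac{t^{\sigma-1}e^{-\lambda t}}{\Gamma(\sigma)}\,\mathbbm{1}_{\{t>0\}},\qquad
k_-(t)=\frac{(-t)^{\sigma-1}e^{\lambda t}}{\Gamma(\sigma)}\,\mathbbm{1}_{\{t<0\}},
\]
one reads off directly from Definition \ref{Def1} that ${}_{-\infty}D_x^{-\sigma,\lambda}u=k_+\ast u$ and ${}_xD_{+\infty}^{-\sigma,\lambda}u=k_-\ast u$. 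Since $\sigma>0$ and (for the transform to exist) $\lambda>0$, each kernel lies in $L^1(\mathbb{R})$, so Young's inequality gives $k_\pm\ast u\in L^q$ and the convolution theorem $\mathcal{F}(k\ast u)=\hat k\,\hat u$ applies with the stated convention.

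It then remains only to evaluate $\hat k_\pm$. For $k_+$, $\hat k_+(\omega)=\frac{1}{\Gamma(\sigma)}\int_0^\infty t^{\sigma-1}e^{-(\lambda+i\omega)t}\,\mathrm{d}t$, which converges absolutely because $|e^{-(\lambda+i\omega)t}|=e^{-\lambda t}$. Analytically continuing the real-variable identity $\int_0^\infty t^{\sigma-1}e^{-st}\,\mathrm{d}t=\Gamma(\sigma)s^{-\sigma}$ from $s>0$ into the half-plane $\mathrm{Re}\,s>0$ (equivalently, rotating the contour) yields $\hat k_+(\omega)=(\lambda+i\omega)^{-\sigma}$ with the principal branch, i.e.\ \eqref{ffb}; the substitution $t\mapsto -t$ sends $\hat k_-$ to the same integral with $\lambda+i\omega$ replaced by $\lambda-i\omega$, giving \eqref{ffc}.

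For the derivatives I would first establish the operator identities
\[
{}_{-\infty}D_x^{\alpha,\lambda}=(\partial_x+\lambda)^n\,{}_{-\infty}D_x^{-(n-\alpha),\lambda},
\qquad
{}_xD_{+\infty}^{\alpha,\lambda}=(\lambda-\partial_x)^n\,{}_xD_{+\infty}^{-(n-\alpha),\lambda},
\]
valid for $\alpha\in(n-1,n)$. These rest on the elementary conjugation rules $(\partial_x+\lambda)^n\big(e^{-\lambda x}f\big)=e^{-\lambda x}\partial_x^n f$ and $(\lambda-\partial_x)^n\big(e^{\lambda x}f\big)=(-1)^n e^{\lambda x}\partial_x^n f$: writing ${}_{-\infty}D_x^{-(n-\alpha),\lambda}u=e^{-\lambda x}\,{}_{-\infty}D_x^{-(n-\alpha)}(e^{\lambda\cdot}u)$ and applying $(\partial_x+\lambda)^n$ reproduces $e^{-\lambda x}\,{}_{-\infty}D_x^{\alpha}(e^{\lambda\cdot}u)$, which is exactly ${}_{-\infty}D_x^{\alpha,\lambda}u$ by Definition \ref{def33} (and symmetrically on the right, where the factor $(-1)^n$ matches the $(-1)^n$ in the right Riemann--Liouville derivative). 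Since $\mathcal{F}(\partial_x f)=i\omega\hat f$ gives $\mathcal{F}\big((\partial_x+\lambda)^n f\big)=(\lambda+i\omega)^n\hat f$ and $\mathcal{F}\big((\lambda-\partial_x)^n f\big)=(\lambda-i\omega)^n\hat f$, multiplying by the integral transforms just computed produces $(\lambda+i\omega)^{n}(\lambda+i\omega)^{-(n-\alpha)}\hat u=(\lambda+i\omega)^{\alpha}\hat u$ and its mirror image, i.e.\ \eqref{ffbv}--\eqref{ffcv}. The two variant formulas then follow at once from the linearity of $\mathcal{F}$ applied to Definition \ref{def44}, using $\mathcal{F}(\lambda^\alpha u)=\lambda^\alpha\hat u$ and $\mathcal{F}(\alpha\lambda^{\alpha-1}\partial_x u)=\alpha\lambda^{\alpha-1} i\omega\,\hat u$.

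I expect the main obstacle to be the analytic bookkeeping rather than the algebra. Concretely, one must fix the principal branch of $(\lambda\pm i\omega)^{\pm\sigma}$ and $(\lambda\pm i\omega)^{\alpha}$ consistently and justify the contour rotation identifying the oscillatory Gamma integral with $s^{-\sigma}$; this is the one place where absolute convergence, hence $\lambda>0$, is genuinely used, since at $\lambda=0$ the kernels $k_\pm$ fail to be integrable on their tails. The stated range $\lambda\ge0$ therefore requires a separate passage to the limit $\lambda\downarrow0$, recovering the classical symbols $(i\omega)^{\pm\sigma}$ and $(i\omega)^{\alpha}$ in the tempered-distribution sense; I would control this by dominated convergence on $\hat k_\pm$, together with the hypothesis $u,u^{(n)}\in L^q(\mathbb{R})$, which also legitimizes the differentiation under the integral sign used in the operator identity and the application of the convolution theorem.
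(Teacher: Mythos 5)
The paper never proves this lemma: it is imported wholesale with a citation to Buschman, Srivastava--Buschman, and Baeumer--Meerschaert, so there is no internal proof to compare yours against. Judged on its own merits, your argument is correct, and it is essentially the proof the cited sources (and the paper's surrounding text) have in mind. The convolution reading of Definition \ref{Def1} with the kernels $k_\pm$, followed by the Gamma-integral evaluation $\hat k_\pm(\omega)=(\lambda\pm i\omega)^{-\sigma}$, gives \eqref{ffb}--\eqref{ffc}. Your operator identity ${}_{-\infty}D_x^{\alpha,\lambda}=(\partial_x+\lambda)^n\,{}_{-\infty}D_x^{-(n-\alpha),\lambda}$ is precisely the content of the Remark the paper places immediately after the lemma (the rewriting with $\bigl(\frac{\mathrm{d}}{\mathrm{d}x}\pm\lambda\bigr)^n$ acting on a tempered integral, cited to Carmi et al.\ and Deng et al.), and your conjugation computations $(\partial_x+\lambda)^n\bigl(e^{-\lambda x}f\bigr)=e^{-\lambda x}\partial_x^n f$ and $(\lambda-\partial_x)^n\bigl(e^{\lambda x}f\bigr)=(-1)^n e^{\lambda x}\partial_x^n f$ verify it correctly against Definition \ref{def33}, including the sign bookkeeping that absorbs the $(-1)^n$ in the right Riemann--Liouville derivative; the branch consistency is unproblematic because $\mathrm{Re}(\lambda\pm i\omega)=\lambda>0$ keeps all arguments in $(-\pi/2,\pi/2)$, so principal powers genuinely add. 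The variant formulas then follow by linearity from Definition \ref{def44} exactly as you state.

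Two caveats, both of which you at least flag. First, at $\lambda=0$ (permitted by the lemma's hypothesis $\lambda\ge 0$) the kernels $k_\pm$ leave $L^1(\mathbb{R})$ and the symbols $(i\omega)^{\pm\sigma}$, $(i\omega)^{\alpha}$ must be read in the tempered-distribution sense; your plan of passing to the limit $\lambda\downarrow 0$ is the standard repair, but as written ``dominated convergence on $\hat k_\pm$'' does not by itself transfer the limit to $\mathcal{F}(k_\pm\ast u)$ --- one needs to pair against Schwartz test functions or use Plancherel, which your hypotheses do permit. Second, for $q>2$ the Fourier transform of $u\in L^q$ need not be a function, so the displayed identities can only hold distributionally; this imprecision sits in the lemma's statement itself rather than in your argument, and for $1\le q\le 2$ your Young's-inequality plus convolution-theorem step is sound. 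Neither point undermines the proof.
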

\begin{remark} [\cite{Carmi:10,Deng:14}]
The left and right Riemann-Liouville tempered fractional derivatives can be, respectively, rewritten as
    \begin{equation*}
      _{-\infty}D_x^{\alpha,\lambda}u(x)
      =\frac{1}{\Gamma(n-\alpha)}\left(\frac{\mathrm{d}}{\mathrm{d}
          x}+ \lambda\right)^n\int_{-\infty}^x\frac{e^{-\lambda(x-\xi)}u(\xi)}{(x-\xi)^{\alpha-n+1}}\mathrm{d}\xi
          ;
    \end{equation*}
and
    \begin{equation*}
      _xD_{+\infty}^{\alpha,\lambda}u(x)
      =\frac{(-1)^n}{\Gamma(n-\alpha)}\left(\frac{\mathrm{d}}{\mathrm{d}
          x}- \lambda\right)^n\int_x^{+\infty}\frac{e^{-\lambda(\xi-x)}u(\xi)}{(\xi-x)^{\alpha-n+1}}\mathrm{d}\xi.
    \end{equation*}
\end{remark}

\subsection{Discretizations of the tempered fractional calculus}
In this subsection, we derive the Gr\"{u}nwald type discretizations for the tempered fractional calculus. The standard Gr\"{u}nwald discretization  generally yields an unstable finite difference scheme when it is used to solve the time dependent fractional PDEs \cite{Meerschaert:04}.
To remedy this defect, Meerschaert et al introduce a shifted Gr\"{u}nwald formula. The similar numerical unstability also happens for the time dependent tempered fractional PDEs; so the shift for the Gr\"{u}nwald type discretizations of the tempered fractional derivative is also necessary.

\begin{lemma}\label{lem1}
Let $u(x)\in L^1(\mathbb{R})$, ${_{-
  \infty}}D_x^{\alpha+1,\lambda}u$ and its Fourier transform belong to $L^1(\mathbb{R})$; $p\in\mathbb{R},h>0,\,\lambda\geq0$ and $\alpha\in(n-1,n),n\in \mathbb{N}^{+}$. Defining the shifted Gr\"{u}nwald type difference operator
\begin{equation}\label{lem01}
A^{\alpha,\lambda}_{h,p}u(x):=\frac{1}{h^{\alpha}}\sum_{k=0}^{+\infty}w^{(\alpha)}_{k}e^{-(k-p)h\lambda}u(x-(k
-p)h)-\frac{1}{h^{\alpha}}\big(e^{ph\lambda}(1-e^{-h\lambda})^{\alpha}\big)u(x),
\end{equation}
then
\begin{equation}\label{lem02}
A^{\alpha,\lambda}_{h,p}u(x)={_{-\infty}}D_x^{\alpha,\lambda}u(x)-\lambda^\alpha u(x)+O(h),
\end{equation}
where $w_k^{(\alpha)}=(-1)^k\binom{\alpha}{k},\,k\geq0$  denotes the normalized Gr\"{u}nwald weights.
\end{lemma}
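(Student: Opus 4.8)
The plan is to work entirely on the Fourier side, since Lemma \ref{lem0a} already identifies the symbols of the operators on the right-hand side of \eqref{lem02}. First I would compute the Fourier symbol of $A^{\alpha,\lambda}_{h,p}$. Applying the shift rule $\mathcal{F}(u(x-a))(\omega)=e^{-ia\omega}\hat u(\omega)$ to each term of \eqref{lem01} and factoring $e^{-(k-p)h\lambda}e^{-i(k-p)h\omega}=e^{ph(\lambda+i\omega)}e^{-kh(\lambda+i\omega)}$ out of the sum, the generating-function identity $(1-z)^{\alpha}=\sum_{k=0}^{\infty}w_k^{(\alpha)}z^{k}$ evaluated at $z=e^{-h(\lambda+i\omega)}$ (legitimate since $|z|=e^{-h\lambda}\le1$) collapses the series, giving
\[
\mathcal{F}\big(A^{\alpha,\lambda}_{h,p}u\big)(\omega)=\Big[\tfrac{1}{h^{\alpha}}e^{ph(\lambda+i\omega)}\big(1-e^{-h(\lambda+i\omega)}\big)^{\alpha}-\tfrac{1}{h^{\alpha}}e^{ph\lambda}\big(1-e^{-h\lambda}\big)^{\alpha}\Big]\hat u(\omega).
\]
Writing $s=\lambda+i\omega$ and $z=hs$, the bracket becomes $s^{\alpha}W_p(hs)-\lambda^{\alpha}W_p(h\lambda)$, where $W_p(z):=e^{pz}\big((1-e^{-z})/z\big)^{\alpha}$. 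By \eqref{ffbv} the target right-hand side ${_{-\infty}}D_x^{\alpha,\lambda}u-\lambda^{\alpha}u$ has symbol $s^{\alpha}-\lambda^{\alpha}$, so it suffices to control $s^{\alpha}W_p(hs)-\lambda^{\alpha}W_p(h\lambda)-(s^{\alpha}-\lambda^{\alpha})$.

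Next I would read off the order from the expansion of $W_p$ at the origin. Since $(1-e^{-z})/z=1-z/2+O(z^2)$ and $e^{pz}=1+pz+O(z^2)$, one gets $W_p(z)=1+(p-\tfrac{\alpha}{2})z+O(z^2)$, so $W_p(z)-1=z\,R(z)$ with $R$ continuous near $0$ and $R(0)=p-\alpha/2$. Consequently
\[
s^{\alpha}W_p(hs)-s^{\alpha}=h\,s^{\alpha+1}R(hs),\qquad \lambda^{\alpha}W_p(h\lambda)-\lambda^{\alpha}=h\,\lambda^{\alpha+1}R(h\lambda),
\]
which exhibits the claimed first-order factor $h$ explicitly and already proves the statement pointwise in $\omega$.

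The main obstacle is turning this pointwise $O(h)$ into a genuine uniform-in-$x$ bound for $A^{\alpha,\lambda}_{h,p}u(x)-\big({_{-\infty}}D_x^{\alpha,\lambda}u(x)-\lambda^{\alpha}u(x)\big)$, and this is exactly where the hypotheses on $u$ enter. The key estimate is that $R(z)=(W_p(z)-1)/z$ is bounded on the closed strip $0\le\mathrm{Re}\,z\le h_0\lambda$ swept out by $z=h(\lambda+i\omega)$ for $h\in(0,h_0]$: near $z=0$ the singularity is removable by the expansion above, while for $|z|$ bounded away from $0$ one bounds $|R(z)|\le(|W_p(z)|+1)/|z|$ using the branch-free modulus $|W_p(z)|=e^{p\,\mathrm{Re}\,z}\,|(1-e^{-z})/z|^{\alpha}$, which tends to $0$ as $|\mathrm{Im}\,z|\to\infty$ since $|(1-e^{-z})/z|\le2/|z|$. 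Letting $M:=\sup|R|$ over this strip, the bound $|s^{\alpha+1}R(hs)|\le M|\lambda+i\omega|^{\alpha+1}$ holds for all $\omega$. Inverting the Fourier transform and using $|e^{i\omega x}|=1$ then yields
\[
\Big|A^{\alpha,\lambda}_{h,p}u(x)-\big({_{-\infty}}D_x^{\alpha,\lambda}u(x)-\lambda^{\alpha}u(x)\big)\Big|\le\frac{Mh}{2\pi}\int_{\mathbb{R}}|\lambda+i\omega|^{\alpha+1}\,|\hat u(\omega)|\,\mathrm{d}\omega+\frac{Ch}{2\pi}\int_{\mathbb{R}}|\hat u(\omega)|\,\mathrm{d}\omega.
\]
The first integral is finite precisely because $\mathcal{F}\big({_{-\infty}}D_x^{\alpha+1,\lambda}u\big)(\omega)=(\lambda+i\omega)^{\alpha+1}\hat u(\omega)$ lies in $L^1(\mathbb{R})$ by assumption, and this integrability forces $\hat u\in L^1(\mathbb{R})$ as well (split at $|\omega|=1$, using that $u\in L^1$ makes $\hat u$ bounded near the origin), so the second integral is finite too. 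Both terms are $O(h)$, which gives \eqref{lem02}. I expect the delicate points to be the uniform boundedness of $R$ across the whole family of vertical lines rather than a single one, together with the careful justification of the generating-function identity (especially the borderline $\lambda=0$ case, where $|z|=1$) and of Fourier inversion under the stated $L^1$ hypotheses; the algebra of the symbol computation itself is routine.
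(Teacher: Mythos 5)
Your proposal is correct and follows essentially the same route as the paper's own proof: a Fourier symbol computation via the generating function $(1-z)^{\alpha}=\sum_{k\ge0}w_k^{(\alpha)}z^k$, the expansion $W_p(z)=P_h$-type $=1+(p-\tfrac{\alpha}{2})z+O(z^2)$, and an $L^1$ Fourier-inversion bound using $\mathcal{F}({_{-\infty}}D_x^{\alpha+1,\lambda}u)=(\lambda+i\omega)^{\alpha+1}\hat u\in L^1(\mathbb{R})$. The only difference is that you spell out details the paper leaves implicit (uniform boundedness of $R$ on the strip, deducing $\hat u\in L^1$, the borderline $\lambda=0$ case), which strengthens rather than changes the argument.
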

\begin{remark} \label{remark3}
The point $x+(p-\alpha/2)h$ is the superconvergent point of the
approximation $A^{\alpha,\lambda}_{h,p}$ to
${_{-\infty}}D_y^{\alpha,\lambda}-\lambda^\alpha$, i.e.,
$A^{\alpha,\lambda}_{h,p}u(x)={_{-\infty}}D_y^{\alpha,\lambda}u(y)- \lambda^{\alpha}u(y)+O(h^2)$
with $y=x+(p-\alpha/2)h$ (the deriving process is similar to the one
given in \cite{Nasir:13}).
\end{remark}
\begin{remark}
Under the assumption given in Lemma \ref{lem1}, for tempered fractional derivatives defined in \eqref{Lrl}, we have \cite{Baeumera:10}
\begin{equation}\label{lem02remark}
A^{\alpha,\lambda}_{h,p}u(x)=
\begin{cases}
\displaystyle {_{-\infty}}\mathbf{D}_x^{\alpha,\lambda}u(x)+O(h),& \text{$0<\alpha<1$},\\
\displaystyle
{_{-\infty}}\mathbf{D}_x^{\alpha,\lambda}u(x)+\alpha \lambda^{\alpha-1}\partial_xu(x)+O(h),& \text{$1<\alpha<2$}.
\end{cases}
\end{equation}
\end{remark}
\begin{proof}
The proof is similar to the one given in \cite{Baeumera:10}. Taking Fourier transform on both sides of (\ref{lem01}), we obtain
  \begin{equation}
    \begin{aligned}\label{lem03}
     \mathcal{F}[A^{\alpha,\lambda}_{h,p}u](\omega) & =\frac{1}{h^\alpha}
      \sum_{k=0}^{
+\infty}w_k^{(\alpha)}e^{-(k-p)h(\lambda+i\omega)}
      \hat{u}(\omega)-\frac{1}{h^{\alpha}}\big(e^{ph\lambda}(1-e^{-h\lambda})^{\alpha}\big)\hat{u}(\omega) \\
      & =e^{ph(\lambda+i\omega)}\bigg(\frac{1-e^{-h(\lambda+i\omega)}}{h}\bigg)^\alpha
        \hat{u}(\omega)-e^{ph\lambda}\bigg(\frac{1-e^{-h\lambda}}{h}\bigg)^{\alpha}\hat{u}(\omega)
        \\
      & =\big[(\lambda+i\omega)^{\alpha}P_h
      (\lambda+i\omega)-\lambda^{\alpha}P_h
      (\lambda)\big]\hat{u}(\omega),
    \end{aligned}
  \end{equation}
  where
  \begin{equation}\label{lem04ap}
    P_h(z)=e^{phz}\bigg(\frac{1-e^{-hz}}{hz}\bigg)^\alpha=1+(p-\frac{\alpha}{2})hz+O(|z|^2),\,\, \mbox{with}\,\,
    z=\lambda+i\omega\,\,\textrm{or}\,\,\lambda.
  \end{equation}
  Denoting $$\hat{\phi}(\omega,h)=\mathcal{F}[A^{\alpha,\lambda}_{h,p}u](\omega)-\mathcal{F}[{_{-\infty}}D_x^{\alpha,\lambda}u- \lambda^{\alpha}u](\omega)=
  \big[(\lambda+i\omega)^{\alpha}\big(P_h(\lambda+i\omega)-1\big)
  -\lambda^{\alpha}\big(P_h(\lambda)-1\big)\big]\hat{u}(\omega),$$ from (\ref{lem03}) and (\ref{ffbv})
  there exists
  \begin{equation*}
    |\hat{\phi}(\omega,h)|\le C\big[h|(\lambda+i\omega)|^{\alpha+1}+h|\lambda|^{\alpha+1}\big]|\hat{u}(\omega
    )|.
  \end{equation*}
  With the condition $\mathcal{F}[{_{-\infty}}D_x^{\alpha+1,\lambda}u](k)\in L^1(\mathbb{R})$, and using the Riemann-Lebesgue Lemma, it yields
  \begin{equation*}
    \begin{aligned}
     |A^{\alpha,\lambda}_{h,p}u(x)-~_{-\infty}D_{x}^{\alpha,\lambda}u(x)+\lambda^{\alpha}u(x)| & =|\phi|\le
    \frac{1}{2\pi}\int_{\mathbb{R}}|\hat{\phi}(\omega,h)|d\omega \\
      &\le C\|\mathcal{F}[{_{-\infty}}D_x^{\alpha+1,\lambda}u+\lambda^{\alpha+1}u(x)](\omega)\|_{L^1}h=O(h),
    \end{aligned}
  \end{equation*}
where the property of the Fourier transforms for the left Riemann-Liouville tempered fractional derivatives (\ref{ffbv}) is used.
\end{proof}
\begin{lemma}\label{lem2}
Let $u(x)\in L^1(\mathbb{R})$, ${_{x}}D_{+\infty}^{\alpha+1,\lambda}u$ and its Fourier transform belong to $L^1(\mathbb{R})$; $p\in\mathbb{R},\,h>0,\,\lambda\geq0$ and $\alpha\in(n-1,n),n\in \mathbb{N}^{+}$. Define the tempered shifted Gr\"{u}nwald type difference operator
\begin{equation}\label{lem20}
B^{\alpha,\lambda}_{h,p}u(x):=\frac{1}{h^{\alpha}}\sum_{k=0}^{+\infty}w^{(\alpha)}_{k}e^{-(k-p)h\lambda}u(x+(k
-p)h)-\frac{1}{h^{\alpha}}\big(e^{ph\lambda}(1-e^{-h\lambda})^{\alpha}\big)u(x).
\end{equation}
Then
\begin{equation}\label{lem201}
B^{\alpha,\lambda}_{h,q}u(x)=
 {_x}D_{+\infty}^{\alpha,\lambda}u(x)-\lambda^\alpha u(x)+O(h).
\end{equation}
\end{lemma}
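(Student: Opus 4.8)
The plan is to mirror verbatim the Fourier-transform argument used for Lemma \ref{lem1}, the only essential change being that the spatial shift in $B^{\alpha,\lambda}_{h,p}$ points in the opposite direction ($x+(k-p)h$ rather than $x-(k-p)h$). This single sign flip replaces $\lambda+i\omega$ by $\lambda-i\omega$ throughout and forces us to invoke the right-sided transform rule \eqref{ffcv} in place of \eqref{ffbv}, but leaves the entire structure of the estimate intact.

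First I would take the Fourier transform of both sides of \eqref{lem20}. Using the shift rule $\mathcal{F}[u(\cdot+a)](\omega)=e^{ia\omega}\hat{u}(\omega)$ with $a=(k-p)h$, each term $e^{-(k-p)h\lambda}u(x+(k-p)h)$ now contributes a factor $e^{-(k-p)h(\lambda-i\omega)}$; this is exactly where the sign differs from the left-sided case. Summing the Gr\"unwald series through the generating function $\sum_{k\ge0}w_k^{(\alpha)}z^k=(1-z)^{\alpha}$ (legitimate since $z=e^{-h(\lambda-i\omega)}$ satisfies $|z|=e^{-h\lambda}\le1$) would yield
\[
\mathcal{F}[B^{\alpha,\lambda}_{h,p}u](\omega)=\big[(\lambda-i\omega)^{\alpha}P_h(\lambda-i\omega)-\lambda^{\alpha}P_h(\lambda)\big]\hat{u}(\omega),
\]
with the \emph{same} symbol $P_h$ introduced in \eqref{lem04ap}, satisfying $P_h(z)=1+(p-\tfrac{\alpha}{2})hz+O(|z|^2)$ for $z=\lambda-i\omega$ or $z=\lambda$.

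Next I would subtract $\mathcal{F}[{_x}D_{+\infty}^{\alpha,\lambda}u-\lambda^{\alpha}u](\omega)=\big[(\lambda-i\omega)^{\alpha}-\lambda^{\alpha}\big]\hat{u}(\omega)$, computed from \eqref{ffcv}, to form the error symbol
\[
\hat{\psi}(\omega,h)=\big[(\lambda-i\omega)^{\alpha}\big(P_h(\lambda-i\omega)-1\big)-\lambda^{\alpha}\big(P_h(\lambda)-1\big)\big]\hat{u}(\omega).
\]
Inserting the first-order expansion of $P_h$ gives the bound $|\hat{\psi}(\omega,h)|\le Ch\big[|\lambda-i\omega|^{\alpha+1}+\lambda^{\alpha+1}\big]|\hat{u}(\omega)|$. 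I would then invert the transform and apply the Riemann–Lebesgue lemma together with the hypothesis $\mathcal{F}[{_x}D_{+\infty}^{\alpha+1,\lambda}u]\in L^1(\mathbb{R})$ to pass from the $L^1$ bound on $\hat{\psi}$ to the pointwise estimate $B^{\alpha,\lambda}_{h,p}u(x)={_x}D_{+\infty}^{\alpha,\lambda}u(x)-\lambda^{\alpha}u(x)+O(h)$.

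The only point requiring genuine care—there is essentially no analytic obstacle—is the sign bookkeeping: the reversed shift direction turns every $+i\omega$ into $-i\omega$, so one must consistently pair it with the right-sided Fourier rule \eqref{ffcv} and verify that $P_h$ is literally unchanged, which is what guarantees that the $O(h)$ constant, and consequently the superconvergence-point remark analogous to Remark \ref{remark3}, transfer unaltered from the left-sided case.
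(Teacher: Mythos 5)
Your proposal is correct and follows essentially the same route as the paper: the authors likewise take Fourier transforms, obtain the symbol $\big[(\lambda-i\omega)^{\alpha}P_h(\lambda-i\omega)-\lambda^{\alpha}P_h(\lambda)\big]\hat{u}(\omega)$ with the same $P_h$ from \eqref{lem04ap}, subtract the exact symbol via \eqref{ffcv}, bound the error symbol by $Ch\big[|\lambda-i\omega|^{\alpha+1}+\lambda^{\alpha+1}\big]|\hat{u}(\omega)|$, and invert using the $L^1$ hypothesis on $\mathcal{F}[{_x}D_{+\infty}^{\alpha+1,\lambda}u]$. Your sign bookkeeping ($\lambda+i\omega\mapsto\lambda-i\omega$ from the reversed shift under the convention $\mathcal{F}[u(\cdot+a)](\omega)=e^{ia\omega}\hat{u}(\omega)$) matches the paper exactly.
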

\begin{remark}\label{remark4}
The point $x-(p-\alpha/2)h$ is the superconvergent point of the
approximation $B^{\alpha,\lambda}_{h,p}$ to
${_{y}}D_\infty^{\alpha,\lambda}-\lambda^{\alpha}$, i.e.,
$B^{\alpha,\lambda}_{h,p}u(x)={_{y}}D_\infty^{\alpha,\lambda}u(y)-\lambda^{\alpha}u(y)+O(h^2)$
with $y=x-(p-\alpha/2)h$ (the deriving process is similar to the one
given in \cite{Nasir:13}).
\end{remark}
\begin{remark}
Under the assumption given in Lemma \ref{lem2}, for tempered fractional derivatives defined in  \eqref{Rrl}, we have
\begin{equation}\label{lem201remark}
B^{\alpha,\lambda}_{h,q}u(x)=
\begin{cases}
\displaystyle {_x}\mathbf{D}_{+\infty}^{\alpha,\lambda}u(x)+O(h),& \text{$0<\alpha<1$},\\
\displaystyle
{_x}\mathbf{D}_{+\infty}^{\alpha,\lambda}u(x)-\alpha \lambda^{\alpha-1}\partial_xu(x)+O(h),& \text{$1<\alpha<2$}.
\end{cases}
\end{equation}
\end{remark}
\begin{proof}
Taking Fourier transform on both sides of (\ref{lem20}), we obtain
  \begin{equation*}
    \begin{aligned}\label{eq:2.12}
      \mathcal{F}[B^{\alpha,\lambda}_{h,p}u](\omega) & =\frac{1}{h^\alpha}
      \sum_{k=0}^{+\infty}w_k^{(\alpha)}e^{-(k-p)h(\lambda-i\omega)}
      \hat{u}(\omega)-\frac{1}{h^{\alpha}}\big(e^{ph\lambda}(1-e^{-h\lambda})^{\alpha}\big)\hat{u}(\omega) \\
      & = e^{ph(\lambda-i\omega)}\bigg(\frac{1-e^{-h(\lambda-i\omega)}}{h}\bigg)^\alpha
        \hat{u}(\omega)-e^{ph\lambda}\bigg(\frac{1-e^{-h\lambda}}{h}\bigg)^{\alpha}\hat{u}(\omega)
        \\
      & =\big[(\lambda-i\omega)^{\alpha}P_h
      (\lambda-i\omega)-\lambda^{\alpha}P_h
      (\lambda)\big]\hat{u}(\omega),
    \end{aligned}
  \end{equation*}
  where $P_h(z)$ is defined by (\ref{lem04ap}) with $z=\lambda-i\omega$ or $\lambda$.
  Denoting $\hat{\phi}(\omega,h)=\mathcal{F}[B^{\alpha,\lambda}_{h,p}u](\omega)-\mathcal{F}[{_{x}}D_{\infty}^{\alpha,\lambda}u- \lambda^{\alpha}u](\omega)$, then with the similar method used in the proof of Lemma \ref{lem1}, and using the Fourier transform of the right Riemann-Liouville tempered fractional derivative (\ref{ffcv}), we obtain
  \begin{equation*}
    \begin{aligned}
     |B^{\alpha,\lambda}_{h,p}u(x)-{_x}D_{+\infty}^{\alpha,\lambda}u(x)+ \lambda^{\alpha}u(x)|
      & =|\phi|\le
    \frac{1}{2\pi}\int_{\mathbb{R}}|\hat{\phi}(\omega,h)|d\omega \\
      &\le C\|\mathcal{F}[{_{x}}D_{+\infty}^{\alpha+1,\lambda}u+\lambda^{\alpha+1}u(x)](\omega)\|_{L^1}h=O(h).
    \end{aligned}
  \end{equation*}
\end{proof}

The approximation accuracy of the classic difference operator can be improved by adding the band of discretization stencils  \cite{Fornberg:98}. And then the computational cost increases accordingly. However, because of the nonlocal property of the fractional operator, even for the first order discretizations, the stencil covers the whole interval. Without introducing new computational cost, we can improve the approximation accuracy of the discretized fractional operators by modifying the Gr\"{u}nwald type weights. The improved discretized tempered fractional operators are called tempered weighted and shifted Gr\"{u}nwald difference (tempered-WSGD) operators. 

\begin{theorem}\label{thm1}
  Let $u(x)\in L^1(\mathbb{R})$, ${_{-
  \infty}}D_x^{\alpha+\ell,\lambda}u$ and its Fourier transform belong to $L^1(\mathbb{R})$;
 and define the left tempered-WSGD operator by
  \begin{equation}\label{rWSGD}
  _{_L}\mathcal{D}_{h,p_{1},p_{2},\ldots,p_{m}}^{\alpha,\gamma_{1},\gamma_{2},\ldots,\gamma_{m}}u(x)
  =\sum_{j=1}^{m}\gamma_{j}A_{h,p_{j}}^{\alpha,\lambda}u(x),
  \end{equation}
where $p_j$ and $\gamma_j$ are determined by (\ref{system0})-(\ref{system3}). Then, for any integer $m\geq\ell$, there exists
\begin{equation}\label{thm01}
 _{_L}\mathcal{D}_{h,p_{1},p_{2},\ldots,p_{m}}^{\alpha,\gamma_{1},\gamma_{2},\ldots,\gamma_{m}}u(x)
 =
{_{-\infty}}D_x^{\alpha,\lambda}u(x)- \lambda^{\alpha}u(x)+O(h^\ell),
\end{equation}
  uniformly for $x\in\mathbb{R}$.

Let $u(x)\in L^1(\mathbb{R})$, $_{x}D_{\infty}^{\alpha+\ell,\lambda}u$ and its Fourier transform belong to $L^1(\mathbb{R})$;
 and define the right tempered-WSGD operator by
   \begin{equation}\label{lWSGD}
   {_R}\mathcal{D}_{h,p_{1},p_{2},\ldots,p_{m}}^{\alpha,\gamma_{1},\gamma_{2},\ldots,\gamma_{m}}u(x)
   =\sum_{j=1}^{m}\gamma_{j}B_{h,p_{j}}^{\alpha,\lambda}u(x),
   \end{equation}
   where $p_j$ and $\gamma_j$ are determined by (\ref{system0})-(\ref{system3}). Then, for any integer $m\geq\ell$, there is
  \begin{equation}\label{thm002}
    {_R}\mathcal{D}_{h,p_{1},p_{2},\ldots,p_{m}}^{\alpha,\gamma_{1},\gamma_{2},\ldots,\gamma_{m}}u(x)
    =
{_{x}}D_{+\infty}^{\alpha,\lambda}u(x)- \lambda^{\alpha}u(x)+O(h^\ell),
  \end{equation}
  uniformly for $x\in\mathbb{R}$.
  \begin{itemize}
  \item[]For $\ell=2$, $p_{j},\gamma_{j}$ are real numbers and satisfy the linear system
  \begin{equation}\label{system0}
\begin{cases}
\displaystyle
\sum_{j=1}^{m}\gamma_{j}=1,\\
\displaystyle
\sum_{j=1}^{m}\gamma_{j}\bigg[p_{j}-\frac{\alpha}{2}\bigg]=0.
\end{cases}
\end{equation}
  \item[]For $\ell=3$, $p_{j},\gamma_{j}$ are real numbers and satisfy
  \begin{equation}\label{system1}
\begin{cases}
\displaystyle
\sum_{j=1}^{m}\gamma_{j}=1,\\
\displaystyle
\sum_{j=1}^{m}\gamma_{j}\bigg[p_{j}-\frac{\alpha}{2}\bigg]=0,\\
\displaystyle
\sum_{j=1}^{m}\gamma_{j}\bigg[\frac{p_{j}^2}{2}-\frac{\alpha p_{j}}{2}+\frac{\alpha}{6}+\frac{\alpha(\alpha-1)}{8}\bigg]=0.
\end{cases}
\end{equation}
\item[]For $\ell=4$, $p_{j},\gamma_{j}$ are real numbers and the following hold
  \begin{equation}\label{system2}
\begin{cases}
\displaystyle
\sum_{j=1}^{m}\gamma_{j}=1,\\
\displaystyle
\sum_{j=1}^{m}\gamma_{j}\bigg[p_{j}-\frac{\alpha}{2}\bigg]=0,\\
\displaystyle
\sum_{j=1}^{m}\gamma_{j}\bigg[\frac{p_{j}^2}{2}-\frac{\alpha p_{j}}{2}+\frac{\alpha}{6}+\frac{\alpha(\alpha-1)}{8}\bigg]=0,\\
\displaystyle
\sum_{j=1}^{m}\gamma_{j}\bigg[\frac{p_{j}^3}{6}-\frac{\alpha p_{j}^2}{4}+\big(\frac{\alpha}{6}+\frac{\alpha(\alpha-1)}{8}\big)p_{j}
      -\frac{\alpha}{24}-\frac{\alpha(\alpha-1)}{12}-\frac{\alpha(\alpha-1)(\alpha-2)}{48}\bigg]=0.
\end{cases}
\end{equation}
\item[]For $\ell=5$, $p_{j},\gamma_{j}$ are real numbers and the following hold
  \begin{equation}\label{system3}
\begin{cases}
\displaystyle
\sum_{j=1}^{m}\gamma_{j}=1,\\
\displaystyle
\sum_{j=1}^{m}\gamma_{j}\bigg[p_{j}-\frac{\alpha}{2}\bigg]=0,\\
\displaystyle
\sum_{j=1}^{m}\gamma_{j}\bigg[\frac{p_{j}^2}{2}-\frac{\alpha p_{j}}{2}+\frac{\alpha}{6}+\frac{\alpha(\alpha-1)}{8}\bigg]=0,\\
\displaystyle
\sum_{j=1}^{m}\gamma_{j}\bigg[\frac{p_{j}^3}{6}-\frac{\alpha p_{j}^2}{4}+\big(\frac{\alpha}{6}+\frac{\alpha(\alpha-1)}{8}\big)p_{j}
      -\frac{\alpha}{24}-\frac{\alpha(\alpha-1)}{12}-\frac{\alpha(\alpha-1)(\alpha-2)}{48}\bigg]=0,\\
\displaystyle
\sum_{j=1}^{m}\gamma_{j} \bigg[\frac{p_{j}^4}{24}-\frac{\alpha p^3_{j}}{4}+\frac{1}{2}\big(\frac{\alpha}{6}+\frac{\alpha(\alpha-1)}{8}\big)p^2_{j}
+\big(-\frac{\alpha}{24}-\frac{\alpha(\alpha-1)}{12}-\frac{\alpha(\alpha-1)(\alpha-2)}{48}\big)p_{j}\\
\displaystyle
+\frac{\alpha}{120}+\frac{5\alpha(\alpha-1)}{144}+\frac{\alpha(\alpha-1)(\alpha-2)}{48}
+\frac{\alpha(\alpha-1)(\alpha-2)(\alpha-3)}{384}\bigg]=0.
\end{cases}
\end{equation}
\end{itemize}
\end{theorem}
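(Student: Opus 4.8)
The plan is to reduce everything to the Fourier-side identity already established for a single shift in Lemma~\ref{lem1}, and then to an algebraic condition on the weights. By linearity of the Fourier transform and the computation (\ref{lem03}) applied with each shift $p_j$, I would first obtain
\begin{equation*}
\mathcal{F}\big[{}_{_L}\mathcal{D}_{h,p_{1},\ldots,p_{m}}^{\alpha,\gamma_{1},\ldots,\gamma_{m}}u\big](\omega)
=\bigg[(\lambda+i\omega)^{\alpha}\sum_{j=1}^{m}\gamma_{j}P_h^{p_j}(\lambda+i\omega)
-\lambda^{\alpha}\sum_{j=1}^{m}\gamma_{j}P_h^{p_j}(\lambda)\bigg]\hat{u}(\omega),
\end{equation*}
where $P_h^{p_j}(z)$ denotes the symbol (\ref{lem04ap}) formed with the shift $p_j$ (the superscript is a label, not a power). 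Comparing with the Fourier transform (\ref{ffbv}) of the target ${}_{-\infty}D_x^{\alpha,\lambda}-\lambda^{\alpha}$, the whole statement collapses to requiring
\begin{equation*}
\sum_{j=1}^{m}\gamma_{j}P_h^{p_j}(z)=1+O\big((hz)^{\ell}\big)
\end{equation*}
simultaneously for the two arguments $z=\lambda+i\omega$ and $z=\lambda$.

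The key structural observation is that $P_h^{p_j}(z)=e^{p_jhz}\big((1-e^{-hz})/(hz)\big)^{\alpha}$ depends on $h$ and $z$ only through the product $w:=hz$, so I would expand it as $P_h^{p_j}(z)=\sum_{k\ge0}c_k(p_j,\alpha)\,w^{k}$ with coefficients $c_k(p_j,\alpha)$ depending only on the shift and on $\alpha$, never on $z$. Writing $(1-e^{-w})/w=1+g(w)$ with $g(w)=-w/2+w^{2}/6-w^{3}/24+\cdots$, applying the binomial series to $(1+g)^{\alpha}$ and the exponential series to $e^{p_jw}$, and collecting powers of $w$, I would read off $c_0=1$, $c_1=p_j-\alpha/2$, and then verify that $c_2,c_3,c_4$ coincide exactly with the bracketed expressions in (\ref{system1}), (\ref{system2}) and (\ref{system3}). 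Because the $c_k$ are independent of $z$, one single family of linear conditions controls the expansion for both arguments at once: the systems (\ref{system0})--(\ref{system3}) are precisely $\sum_{j}\gamma_j c_0(p_j,\alpha)=1$ together with $\sum_{j}\gamma_j c_k(p_j,\alpha)=0$ for $1\le k\le\ell-1$, which is exactly what forces the remainder above to be $O(w^{\ell})$.

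With the algebra in hand, the analytic estimate is a verbatim adaptation of Lemma~\ref{lem1}. Setting
\begin{equation*}
\hat{\phi}(\omega,h)=\mathcal{F}\big[{}_{_L}\mathcal{D}_{h,p_{1},\ldots,p_{m}}^{\alpha,\gamma_{1},\ldots,\gamma_{m}}u\big](\omega)
-\mathcal{F}\big[{}_{-\infty}D_x^{\alpha,\lambda}u-\lambda^{\alpha}u\big](\omega),
\end{equation*}
the vanishing of the coefficients up to order $\ell-1$ yields $|\hat{\phi}(\omega,h)|\le C\big[h^{\ell}|\lambda+i\omega|^{\alpha+\ell}+h^{\ell}|\lambda|^{\alpha+\ell}\big]|\hat{u}(\omega)|$; integrating in $\omega$, invoking $\mathcal{F}[{}_{-\infty}D_x^{\alpha+\ell,\lambda}u]\in L^1(\mathbb{R})$ together with (\ref{ffbv}) and the Riemann--Lebesgue lemma, gives the uniform bound (\ref{thm01}). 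The right operator (\ref{thm002}) is handled identically through the proof of Lemma~\ref{lem2}, with $\lambda+i\omega$ replaced by $\lambda-i\omega$; since the coefficients $c_k(p_j,\alpha)$ are the same real numbers, the same systems (\ref{system0})--(\ref{system3}) apply without change.

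I expect the main obstacle to be the bookkeeping in the second step, namely extracting $c_3$ and $c_4$: this requires multiplying the three truncated series (for $g$, for $(1+g)^{\alpha}$, and for $e^{p_jw}$) to the correct order and matching the rather intricate $\alpha$-dependent constants appearing in (\ref{system2}) and (\ref{system3}). Everything else---the Fourier reduction and the Riemann--Lebesgue estimate---is routine once Lemma~\ref{lem1} is granted.
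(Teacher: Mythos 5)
Your proposal is correct and follows essentially the same route as the paper's own proof: Fourier transform the weighted sum of shifted operators, Taylor-expand the symbol $P_{h,j}(z)=e^{p_jhz}\big((1-e^{-hz})/(hz)\big)^{\alpha}$ in powers of $hz$ (the paper's expansion (\ref{lem044}), whose coefficients are exactly your $c_k(p_j,\alpha)$ and yield the systems (\ref{system0})--(\ref{system3})), then bound $\hat{\phi}$ and conclude via the $L^1$ hypothesis and the Riemann--Lebesgue lemma. Your explicit remark that the coefficients are independent of $z$, so one set of linear conditions kills the error at both arguments $z=\lambda+i\omega$ and $z=\lambda$ simultaneously, is a point the paper uses only implicitly.
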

\begin{proof}The standard Fourier transforms are again used here.
Performing the Fourier transform on the left hand of (\ref{rWSGD}), we obtain
  \begin{equation}
    \begin{aligned}\label{eq:2.12}
      \mathcal{F}[_{_L}\mathcal{D}_{h,p_{1},p_{2},\ldots,p_{m}}^{\alpha,\gamma_{1},\gamma_{2},\ldots,\gamma_{m}}u(x)](\omega)& =\sum_{j=1}^{m}\gamma_{j}\bigg(\frac{1}{h^\alpha}
      \sum_{k=0}^{\infty}w_k^{(\alpha)}e^{-(k-p_{j})h(\lambda+i\omega)}\hat{u}(\omega) -\frac{1}{h^{\alpha}}\big(e^{p_jh\lambda}(1-e^{-h\lambda})^{\alpha}\big)\hat{u}(\omega) \bigg) \\
      & = \sum_{j=1}^m\big[(\lambda+i\omega)^{\alpha}P_{h,j}
      (\lambda+i\omega)-\lambda^{\alpha}P_{h,j}
      (\lambda)\big]\hat{u}(\omega)\gamma_{j},
    \end{aligned}
  \end{equation}
  where $P_{h,j}(z)=e^{p_{j}hz}\bigg(\frac{1-e^{-hz}}{hz}\bigg)^\alpha,\,z=\lambda+i\omega$ or $\lambda$, $i=\sqrt{-1}.$
  By a simple Taylor's expansion, we get
  \begin{equation}
    \begin{aligned}\label{lem044}
    e^{p_{j}hz}\bigg(\frac{1-e^{-hz}}{hz}\bigg)^\alpha=&1+\bigg[p_{j}-\frac{\alpha}{2}\bigg]hz
    +\bigg[\frac{p_{j}^2}{2}-\frac{\alpha p_{j}}{2}+\frac{\alpha}{6}+\frac{\alpha(\alpha-1)}{8}\bigg](hz)^2~
     \\
      &+\bigg[\frac{p_{j}^3}{6}-\frac{\alpha p_{j}^2}{4}+\big(\frac{\alpha}{6}+\frac{\alpha(\alpha-1)}{8}\big)p_{j}
      -\frac{\alpha}{24}-\frac{\alpha(\alpha-1)}{12}-\frac{\alpha(\alpha-1)(\alpha-2)}{48}\bigg](hz)^3\\
      &+\bigg[\frac{p_{j}^4}{24}-\frac{\alpha p^3_{j}}{4}+\frac{1}{2}\big(\frac{\alpha}{6}+\frac{\alpha(\alpha-1)}{8})\big)p^2_{j}
      +\big(-\frac{\alpha}{24}-\frac{\alpha(\alpha-1)}{12}-\frac{\alpha(\alpha-1)(\alpha-2)}{48}\big)p_{j}\\
      &
      +\frac{\alpha}{120}+\frac{5\alpha(\alpha-1)}{144}+\frac{\alpha(\alpha-1)(\alpha-2)}{48}
      +\frac{\alpha(\alpha-1)(\alpha-2)(\alpha-3)}{384}\bigg](hz)^4
      \\
      &+O(|zh|^5).
    \end{aligned}
  \end{equation}
  Denoting $\mathcal{F}[_{_L}\mathcal{D}_{h,p_{1},p_{2},\ldots,p_{m}}^{\alpha,\gamma_{1},\gamma_{2},\ldots,\gamma_{m}}u(x)](\omega)
  =\mathcal{F}[{_{-\infty}}D_x^{\alpha,\lambda}u-\lambda^{\alpha}u](\omega)+\hat{\phi}(\omega,h)$, in view of (\ref{lem044}), (\ref{ffbv}), and (\ref{system0})-(\ref{system3}), we have
  \begin{equation}
    |\hat{\phi}(\omega,h)|\le Ch^l\big[|\lambda+i\omega|^{\alpha+\ell}+|\lambda|^{\alpha+\ell}\big]|\hat{u}(\omega)|.
  \end{equation}
  Due to $\mathcal{F}[{_{-\infty}}D_x^{\alpha+\ell,\lambda}u](\omega)\in L^1(\mathbb{R})$, there exists
 \begin{equation*}
    \begin{aligned}
     |_{_L}\mathcal{D}_{h,p_{1},p_{2},\ldots,p_{m}}^{\alpha,\gamma_{1},\gamma_{2},\ldots,\gamma_{m}}u
    -{_{-\infty}}D_x^{\alpha,\lambda}u+\lambda^{\alpha}u|
&  =|\phi|\le
    \frac{1}{2\pi}\int_{\mathbb{R}}|\hat{\phi}(\omega,h)|d\omega\\
&\le C\|\mathcal{F}[{_{-\infty}}D_x^{\alpha+\ell,\lambda}u+\lambda^{\alpha+\ell}u](\omega)\|_{L^1}h^\ell=O(h^\ell).
    \end{aligned}
  \end{equation*}
By the similar arguments we can prove (\ref{thm002}).
\end{proof}
\begin{remark}
Under the assumptions given by Theorem \ref{thm1}, for the tempered fractional derivatives defined in \eqref{Lrl} and \eqref{Rrl}, we deduce that
\begin{equation}\label{thm01remark}
 _{_L}\mathcal{D}_{h,p_{1},p_{2},\ldots,p_{m}}^{\alpha,\gamma_{1},\gamma_{2},\ldots,\gamma_{m}}u(x)
 =
 \begin{cases}
\displaystyle {_{-\infty}}\mathbf{D}_x^{\alpha,\lambda}u(x)+O(h^\ell),,& \text{$0<\alpha<1$},\\
\displaystyle
{_{-\infty}}\mathbf{D}_x^{\alpha,\lambda}u(x)+\alpha \lambda^{\alpha-1}\partial_xu(x)+O(h^\ell),& \text{$1<\alpha<2$};
\end{cases}
  \end{equation}
and
\begin{equation}\label{thm002remark}
    {_R}\mathcal{D}_{h,p_{1},p_{2},\ldots,p_{m}}^{\alpha,\gamma_{1},\gamma_{2},\ldots,\gamma_{m}}u(x)
    =
    \begin{cases}
\displaystyle {_{x}}\mathbf{D}_{+\infty}^{\alpha,\lambda}u(x)+O(h^\ell),& \text{$0<\alpha<1$},\\
\displaystyle
{_{x}}\mathbf{D}_{+\infty}^{\alpha,\lambda}u(x)-\alpha \lambda^{\alpha-1}\partial_xu(x)+O(h^\ell),& \text{$1<\alpha<2$}.
\end{cases}
\end{equation}
\end{remark}
\begin{remark}\label{lem1reaa}
If $u(x)\in L^1(\mathbb{R})$, ${_{-
  \infty}}D_x^{\alpha+1,\lambda}u$ and its Fourier transform belong to $L^1(\mathbb{R})$; $p\in\mathbb{R},h>0,\,\lambda\geq0$ and $\alpha\in(n-1,n),n\in \mathbb{N}^{+}$. Defining the shifted Gr\"{u}nwald type difference operator
\begin{equation}\label{lem01rea}
\tilde{A}^{\alpha,\lambda}_{h,p}u(x):=\frac{1}{h^{\alpha}}\sum_{k=0}^{+\infty}w^{(\alpha)}_{k}e^{-(k-p)h\lambda}u(x-(k
-p)h),
\end{equation}
then
\begin{equation}\label{lem02reaa}
\tilde{A}^{\alpha,\lambda}_{h,p}u(x)={_{-\infty}}D_x^{\alpha,\lambda}u(x)+O(h),
\end{equation}
where $w_k^{(\alpha)}=(-1)^k\binom{\alpha}{k},\,k\geq0$  denotes the normalized Gr\"{u}nwald weights.

If $u(x)\in L^1(\mathbb{R})$, ${_{x}}D_{+\infty}^{\alpha+1,\lambda}u$ and its Fourier transform belong to $L^1(\mathbb{R})$; $p\in\mathbb{R},\,h>0,\,\lambda\geq0$ and $\alpha\in(n-1,n),n\in \mathbb{N}^{+}$. Define the shifted Gr\"{u}nwald type difference operator
\begin{equation}\label{lem20aa}
\tilde{B}^{\alpha,\lambda}_{h,p}u(x):=\frac{1}{h^{\alpha}}\sum_{k=0}^{+\infty}w^{(\alpha)}_{k}e^{-(k-p)h\lambda}u(x+(k
-p)h).
\end{equation}
Then
\begin{equation}\label{lem201aa}
\tilde{B}^{\alpha,\lambda}_{h,q}u(x)={_x}D_{+\infty}^{\alpha,\lambda}u(x)+O(h).
\end{equation}

Moreover, if $u(x)\in L^1(\mathbb{R})$, ${_{-
  \infty}}D_x^{\alpha+\ell,\lambda}u$ and its Fourier transform belong to $L^1(\mathbb{R})$;
 and define the left tempered-WSGD operator by
  \begin{equation}\label{rWSGDa}
  _{_L}\tilde{\mathcal{D}}_{h,p_{1},p_{2},\ldots,p_{m}}^{\alpha,\gamma_{1},\gamma_{2},\ldots,\gamma_{m}}u(x)
  =\sum_{j=1}^{m}\gamma_{j}\tilde{A}_{h,p_{j}}^{\alpha,\lambda}u(x),
  \end{equation}
where $p_j$ and $\gamma_j$ are determined by (\ref{system0})-(\ref{system3}). Then, for any integer $m\geq\ell$, there exists
\begin{equation}\label{thm01a}
 _{_L}\tilde{\mathcal{D}}_{h,p_{1},p_{2},\ldots,p_{m}}^{\alpha,\gamma_{1},\gamma_{2},\ldots,\gamma_{m}}u(x)
 =
{_{-\infty}}D_x^{\alpha,\lambda}u(x)+O(h^\ell),
\end{equation}
  uniformly for $x\in\mathbb{R}$.

If $u(x)\in L^1(\mathbb{R})$, $_{x}D_{\infty}^{\alpha+\ell,\lambda}u$ and its Fourier transform belong to $L^1(\mathbb{R})$;
 and define the right tempered-WSGD operator by
   \begin{equation}\label{lWSGDa}
   {_R}\tilde{\mathcal{D}}_{h,p_{1},p_{2},\ldots,p_{m}}^{\alpha,\gamma_{1},\gamma_{2},\ldots,\gamma_{m}}u(x)
   =\sum_{j=1}^{m}\gamma_{j}\tilde{B}_{h,p_{j}}^{\alpha,\lambda}u(x),
   \end{equation}
   where $p_j$ and $\gamma_j$ are determined by (\ref{system0})-(\ref{system3}). Then, for any integer $m\geq\ell$, there is
  \begin{equation}\label{thm002a}
    {_R}\tilde{\mathcal{D}}_{h,p_{1},p_{2},\ldots,p_{m}}^{\alpha,\gamma_{1},\gamma_{2},\ldots,\gamma_{m}}u(x)
    =
{_{x}}D_{+\infty}^{\alpha,\lambda}u(x)+O(h^\ell),
  \end{equation}
  uniformly for $x\in\mathbb{R}$.
\end{remark}
\begin{remark}
To get the discretizations, including the first and high orders, of the left and right Riemann-Liouville tempered fractional integrals of order $\sigma>0$: $_{-\infty}D_x^{-\sigma,\lambda}u(x)$  and  $ _xD_{\infty}^{-\sigma,\lambda}u(x)$, just use $-\sigma$ to replace $\alpha$ existing in the corresponding discretizations of the left and right Riemann-Liouville tempered fractional derivatives of order $\alpha>0$: $_{-\infty}D_x^{\alpha,\lambda}u(x)$  and  $ _xD_{+\infty}^{\alpha,\lambda}u(x)$.
\end{remark}
Considering a well-defined function $u(x)$ on the bounded interval $[a,b]$, the function $u(x)$ can be zero extended for $x<a$ or $x>b$. Then the $\alpha$-th order left and right Riemann-Liouville tempered fractional derivatives of $u(x)$ at point $x$ can be approximated by the tempered-WSGD operators
  \begin{equation}
    \begin{aligned}\label{fwsgl}
       _aD_x^{\alpha,\lambda}u(x)- \lambda^{\alpha} u(x)=&\sum_{j=1}^{m}\frac{\gamma_j}{h^\alpha}\bigg(\sum_{k=0}^{[\frac{x-a}{h}]
      +p_{j}}w_k^{(\alpha)}e^{-(k-p_{j})h\lambda}u(x-(k-p_{j})h)-\big(e^{p_{j}h\lambda}(1-e^{-h\lambda})^{\alpha}\big)u(x)\bigg)\\
      &+O(h^\ell); \\
       _xD_b^{\alpha,\lambda}u(x)- \lambda^{\alpha} u(x)=&\sum_{j=1}^{m}\frac{\gamma_j}{h^\alpha}\bigg(\sum_{k=0}^{[\frac{b-x}{h}]+p_{j}}
      w_k^{(\alpha)}e^{-(k-p_{j})h\lambda}u(x+(k-p_{j})h)-\big(e^{p_{j}h\lambda}(1-e^{-h\lambda})^{\alpha}\big)u(x)\bigg)\\
      &+O(h^\ell),
    \end{aligned}
  \end{equation}
  and the $\sigma$-th order left and right Riemann-Liouville tempered fractional integrals of $u(x)$ at point $x$ can be approximated by the tempered-WSGD operators
  \begin{equation}
    \begin{aligned}\label{fwsgi}
      & _aD_x^{-\sigma,\lambda}u(x)=\sum_{j=1}^{m}\gamma_jh^\sigma\bigg(\sum_{k=0}^{[\frac{x-a}{h}]
      +p_{j}}w_k^{(-\sigma)}e^{-(k-p_{j})h\lambda}u(x-(k-p_{j})h)\bigg)
      +O(h^\ell), \\
      & _xD_b^{-\sigma,\lambda}u(x)=\sum_{j=1}^{m}\gamma_jh^\sigma\bigg(\sum_{k=0}^{[\frac{b-x}{h}]+p_{j}}
      w_k^{(-\sigma)}e^{-(k-p_{j})h\lambda}u(x+(k-p_{j})h)\bigg)
      +O(h^\ell),
    \end{aligned}
  \end{equation}
 where the weight parameters $\gamma_{j}$ are determined by the above linear algebraic systems given in Theorem
 \ref{thm1}.

\begin{remark}\label{rem:3}

  The parameters $[(x-a)/h]+p_{j}$ are the numbers of the points located on the right/left hand of the point $x$ used for evaluating the $\alpha$-th (or $\sigma$-th)  order left/right Riemann-Liouville tempered fractional derivatives (or integrals) at $x$;
  thus, when employing the discretizations (\ref{fwsgl}) (or (\ref{fwsgi})) for approximating non-periodic boundary problems on bounded interval, $p_{j}$ should be chosen satisfying $|p_{j}|\le1$ to ensure that the nodes at which the values of $u$ are needed in (\ref{fwsgl}) (or (\ref{fwsgi})) are within the bounded interval; otherwise, we need to use another methodology to discretize the tempered fractional derivative when $x$ is close to the
  right/left boundary just like classic ones \cite{Fornberg:98}.
\end{remark}
It is easy to check that any one of the linear systems
(\ref{system0})-(\ref{system3}) with $m=\ell$ has an unique
solution. And for $m>l$, using the knowledge of linear algebra, we
know that the system (\ref{system0})-(\ref{system3}) has infinitely
many solutions. As we have discussed in Theorem \ref{thm1}, in
principle the arbitrarily high order difference approximations can
be obtained. For computational purposes, we are more interested
in the schemes with $|p_{j}|\leq1$. And for the easy of presentation
but without loss of the generality, in the following sections, we
focus on the second order difference approximations ($l=2$) of
\eqref{fwsgl} with three to be determined weights
$\gamma_{j},j=1,2,3$  (m=3), i.e.,
\begin{equation}
    \begin{aligned}\label{eq2.7ww1}
       _{_L}\mathcal{D}_{h,1,0,-1}^{\alpha,\gamma_{1},\gamma_{2},\gamma_{3}}u(x)
      &:=\frac{\gamma_1}{h^\alpha}\sum_{k=0}^{[\frac{x-a}{h}]+1}w^{(\alpha)}_{k}e^{-(k-1)h\lambda}u(x-(k-1)h)     +\frac{\gamma_2}{h^\alpha}\sum_{k=0}^{[\frac{x-a}{h}]}w^{(\alpha)}_{k}e^{-kh\lambda}u(x-kh)
      \\&~~+\frac{\gamma_3}{h^\alpha}\sum_{k=0}^{[\frac{x-a}{h}]-1}w^{(\alpha)}_{k}e^{-(k+1)h\lambda}u(x-(k+1)h)\\
      &~~-\frac{1}{h^\alpha}
      \big((\gamma_{1}e^{h\lambda}+\gamma_{2}+\gamma_{3}e^{-h\lambda})(1-e^{-h\lambda})^\alpha\big)u(x);
 \end{aligned}
  \end{equation}
  and
  \begin{equation}
    \begin{aligned}\label{eq2.7ww2}
          _{_R}\mathcal{D}_{h,1,0,-1}^{\alpha,\gamma_{1},\gamma_{2},\gamma_{3}}u(x)
      &:=\frac{\gamma_1}{h^\alpha}\sum_{k=0}^{[\frac{b-x}{h}]
      +1}w^{(\alpha)}_{k}e^{-(k-1)h\lambda}u(x+(k-1)h)\\
      &~~+\frac{\gamma_{2}}{h^\alpha}\sum_{k=0}^{[\frac{b-x}{h}]}w^{(\alpha)}_{k}e^{-kh\lambda}u(x+kh)
      +\frac{\gamma_3}{h^\alpha}\sum_{k=0}^{[\frac{b-x}{h}]
      -1}w^{(\alpha)}_{k}e^{-(k+1)h\lambda}u(x+(k+1)h)\\
      &~~-\frac{1}{h^\alpha}
      \big((\gamma_{1}e^{h\lambda}+\gamma_{2}+\gamma_{3}e^{-h\lambda})(1-e^{-h\lambda})^\alpha\big)u(x),
    \end{aligned}
  \end{equation}
where the parameters $\gamma_{j},j=1,2,3,$ satisfy the following
linear system
 \begin{equation}\label{lambdaparameter}
\begin{cases}
\displaystyle
\gamma_{1}+\gamma_{2}+\gamma_{3}=1,\\
\displaystyle \gamma_{1}-\gamma_{3}=\frac{\alpha}{2}.
\end{cases}
\end{equation}

The system (\ref{lambdaparameter}) has infinitely many solutions.
With the help of the knowledge of linear algebra, the solutions of
the system of linear algebraic equations (\ref{lambdaparameter}) can
be collected by the following three sets
 \begin{equation}\label{12parameters}
\mathcal{S}^{\alpha}_{1}(\gamma_{1},\gamma_{2},\gamma_{3})=\Big\{\gamma_1~
\textrm{is~given},\gamma_2=\frac{2+\alpha}{2}-2\gamma_{1},
~\gamma_3=\gamma_1-\frac{\alpha}{2}\Big\};
\end{equation}
 or
 \begin{equation}\label{02parameters}
\mathcal{S}^{\alpha}_{2}(\gamma_{1},\gamma_{2},\gamma_{3})=\Big\{\gamma_1=\frac{2+\alpha}{4}-\frac{\gamma_2}{2},~\gamma_2~
\textrm{is~given},
~\gamma_3=\frac{2-\alpha}{4}-\frac{\gamma_2}{2}\Big\};
\end{equation}
or
\begin{equation}\label{2parameters}
\mathcal{S}^{\alpha}_{3}(\gamma_{1},\gamma_{2},\gamma_{3})=\Big\{\gamma_1=\frac{\alpha}{2}+\gamma_3,
~\gamma_2=\frac{2-\alpha}{2}-2\gamma_3,~\gamma_3~
\textrm{is~given}\Big\}.
\end{equation}
The parameter values presented in the sets
$\mathcal{S}^{\alpha}_{j},j=1,2,3$ produce infinite number of second
order approximations for the Riemann-Liouville tempered fractional
derivative. Particularly, if taking $\lambda=0$ and $\gamma_j=0$
 in $\mathcal{S}^{\alpha}_{j},\,j=1,2,3$, they recover the second order
approximations presented in \cite{Tian:11} for the Riemann-Liouville
fractional derivative. After rearranging the weights
$w_k^{(\alpha)}$,  the Riemann-Liouville tempered fractional
derivatives at point $x_{j}$ are approximated as
\begin{equation}
  \begin{aligned}\label{threeGLD}
     _aD_x^{\alpha,\lambda}u(x_{j})-\alpha\lambda^{\alpha}u(x_j)=&
     \frac{1}{h^\alpha}\sum_{k=0}^{j+1}g_k^{(\alpha)}u(x_{j-k+1})
    -\frac{1}{h^\alpha}
      \big((\gamma_{1}e^{h\lambda}+\gamma_{2}+
      \gamma_{3}e^{-h\lambda})(1-e^{-h\lambda})^\alpha\big)u(x_j)\\
      &+O(h^2), \\
     _xD_b^{\alpha,\lambda}u(x_{j})-\alpha\lambda^{\alpha}u(x_j)=&
     \frac{1}{h^\alpha}\sum_{k=0}^{N-j+1}g_k^{(\alpha)}u(x_{j+k-1})
    -\frac{1}{h^\alpha}
      \big((\gamma_{1}e^{h\lambda}+\gamma_{2}
      +\gamma_{3}e^{-h\lambda})(1-e^{-h\lambda})^\alpha\big)u(x_j)\\
      &+O(h^2),
  \end{aligned}
\end{equation}
where the weights are given as
\begin{equation}\label{relation}
\begin{split}
\displaystyle
g_0^{(\alpha)}=\gamma_{1}w^{(\alpha)}_{0}e^{h\lambda},
~g_1^{(\alpha)}=\gamma_{1}w^{(\alpha)}_{1}+\gamma_{2}w_0^{(\alpha)},\\
\displaystyle
g_k^{(\alpha)}=\big(\gamma_{1}w^{(\alpha)}_{k}
+\gamma_{2}w^{(\alpha)}_{k-1}
    +\gamma_{3}w_{k-2}^{(\alpha)}\big)e^{-(k-1)h\lambda},~k\ge2.
\end{split}
\end{equation}
\begin{remark}\label{remtGL}
Similarly, for the Riemann-Liouville tempered
fractional derivatives defined in Definition \ref{def33}, we have
the second order difference approximations,
\begin{equation}
    \begin{aligned}\label{regleq2.7ww1}
       _{_L}\tilde{\mathcal{D}}_{h,1,0,-1}^{\alpha,\gamma_{1},\gamma_{2},\gamma_{3}}u(x)
      &=\frac{\gamma_1}{h^\alpha}\sum_{k=0}^{[\frac{x-a}{h}]+1}w^{(\alpha)}_{k}e^{-(k-1)h\lambda}u(x-(k-1)h)\\
      &~~+\frac{\gamma_2}{h^\alpha}\sum_{k=0}^{[\frac{x-a}{h}]}w^{(\alpha)}_{k}e^{-kh\lambda}u(x-kh)
      \\&~~+\frac{\gamma_3}{h^\alpha}\sum_{k=0}^{[\frac{x-a}{h}]-1}w^{(\alpha)}_{k}e^{-(k+1)h\lambda}u(x-(k+1)h);
 \end{aligned}
  \end{equation}
  and
  \begin{equation}
    \begin{aligned}\label{regleq2.7ww2}
          _{_R}\tilde{\mathcal{D}}_{h,1,0,-1}^{\alpha,\gamma_{1},\gamma_{2},\gamma_{3}}u(x)
      &=\frac{\gamma_1}{h^\alpha}\sum_{k=0}^{[\frac{b-x}{h}]
      +1}w^{(\alpha)}_{k}e^{-(k-1)h\lambda}u(x+(k-1)h)\\
      &~~+\frac{\gamma_{2}}{h^\alpha}\sum_{k=0}^{[\frac{b-x}{h}]}w^{(\alpha)}_{k}e^{-kh\lambda}u(x+kh)
      \\&~~+\frac{\gamma_3}{h^\alpha}\sum_{k=0}^{[\frac{b-x}{h}]
      -1}w^{(\alpha)}_{k}e^{-(k+1)h\lambda}u(x+(k+1)h).
    \end{aligned}
  \end{equation}
   After rearranging the weights
$w_k^{(\alpha)}$,  the Riemann-Liouville tempered fractional
derivatives at point $x_{j}$ are approximated as
\begin{equation}
  \begin{aligned}\label{reglthreeGLD}
     _aD_x^{\alpha,\lambda}u(x_{j})=&
     \frac{1}{h^\alpha}\sum_{k=0}^{j+1}g_k^{(\alpha)}u(x_{j-k+1})+O(h^2), \\
     _xD_b^{\alpha,\lambda}u(x_{j})=&
     \frac{1}{h^\alpha}\sum_{k=0}^{N-j+1}g_k^{(\alpha)}u(x_{j+k-1})+O(h^2),
  \end{aligned}
\end{equation}
where $g_k^{(\alpha)}$ is given in (\ref{relation}).
\end{remark}

\begin{figure}[h] 
    \subfigure[The bounds of $\gamma_1$ in set $\mathcal{S}^{\alpha}_{1}$.]{
        \label{fig:mini:subfig:a} 
        \begin{minipage}[b]{0.35\textwidth}
            \centering
            \includegraphics[scale=.38]{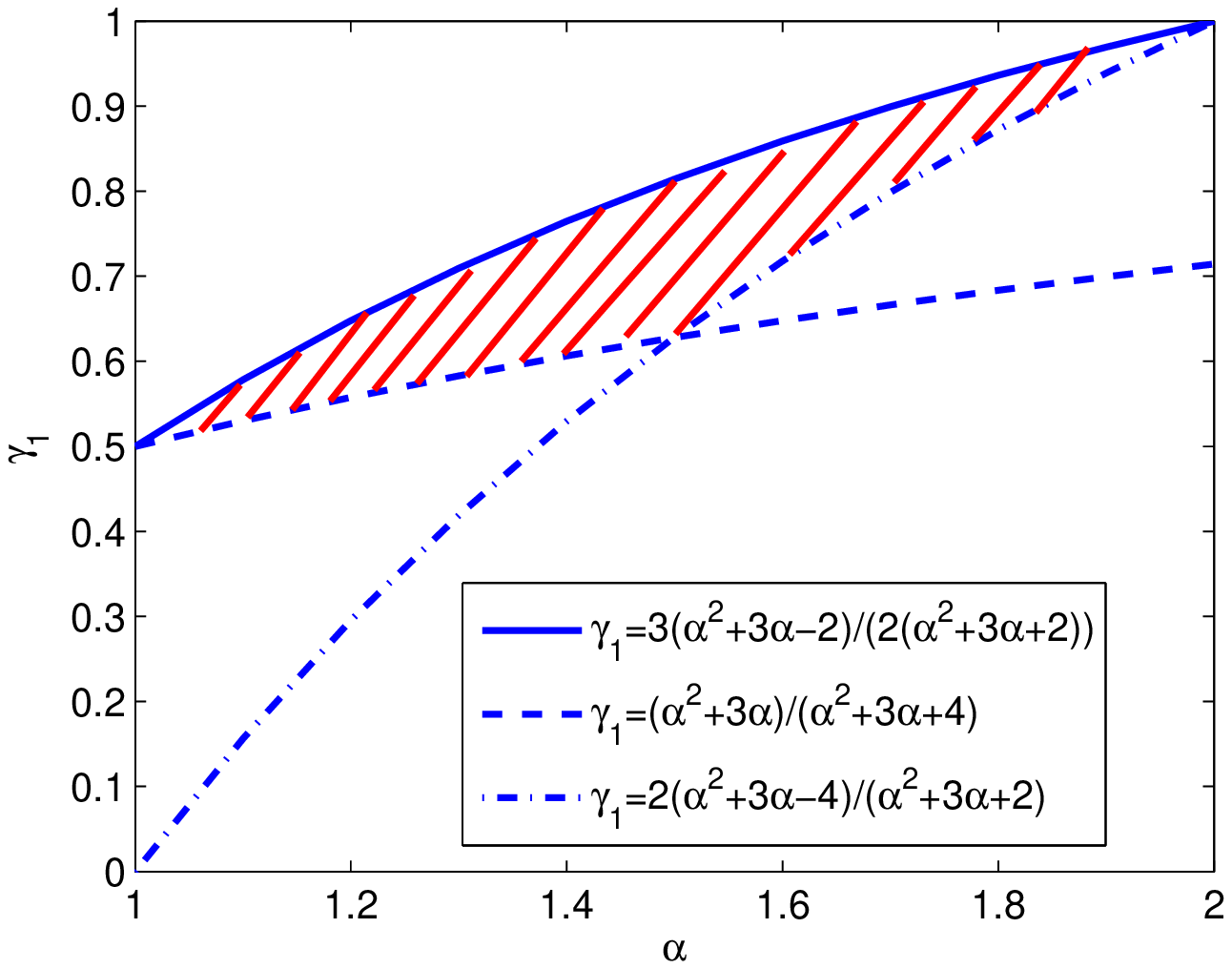}
        \end{minipage}}\hspace*{-20pt}%
    \subfigure[ The bounds of $\gamma_2$ in set $\mathcal{S}^{\alpha}_{2}$.]{
        \label{fig:mini:subfig:b} 
        \begin{minipage}[b]{0.35\textwidth}
            \centering
            \includegraphics[scale=.38]{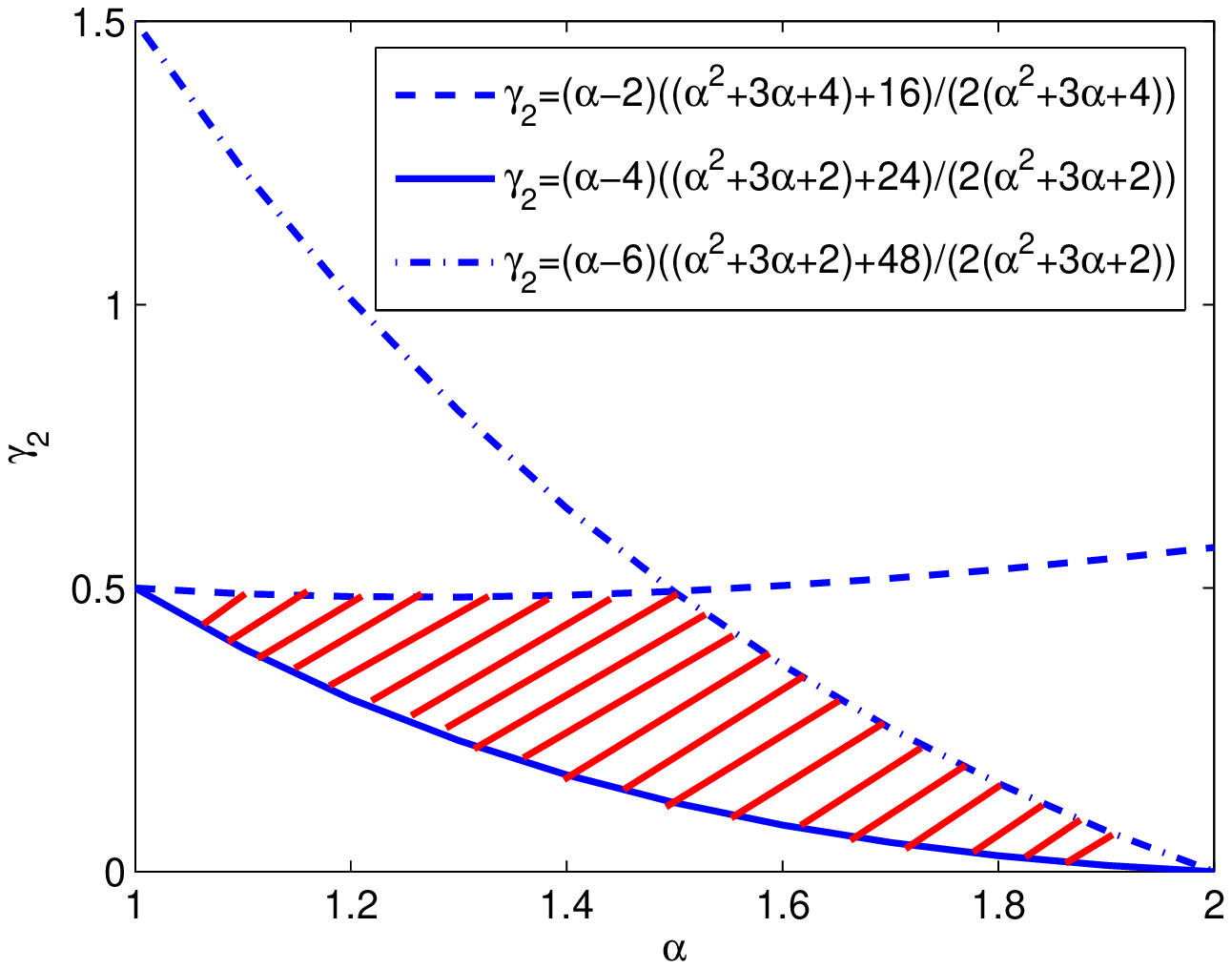}
        \end{minipage}}\hspace*{-20pt}%
\subfigure[ The bounds of $\gamma_3$ in set $\mathcal{S}^{\alpha}_{3}$.]{
        \label{fig:mini:subfig:b} 
        \begin{minipage}[b]{0.35\textwidth}
            \centering
            \includegraphics[scale=.38]{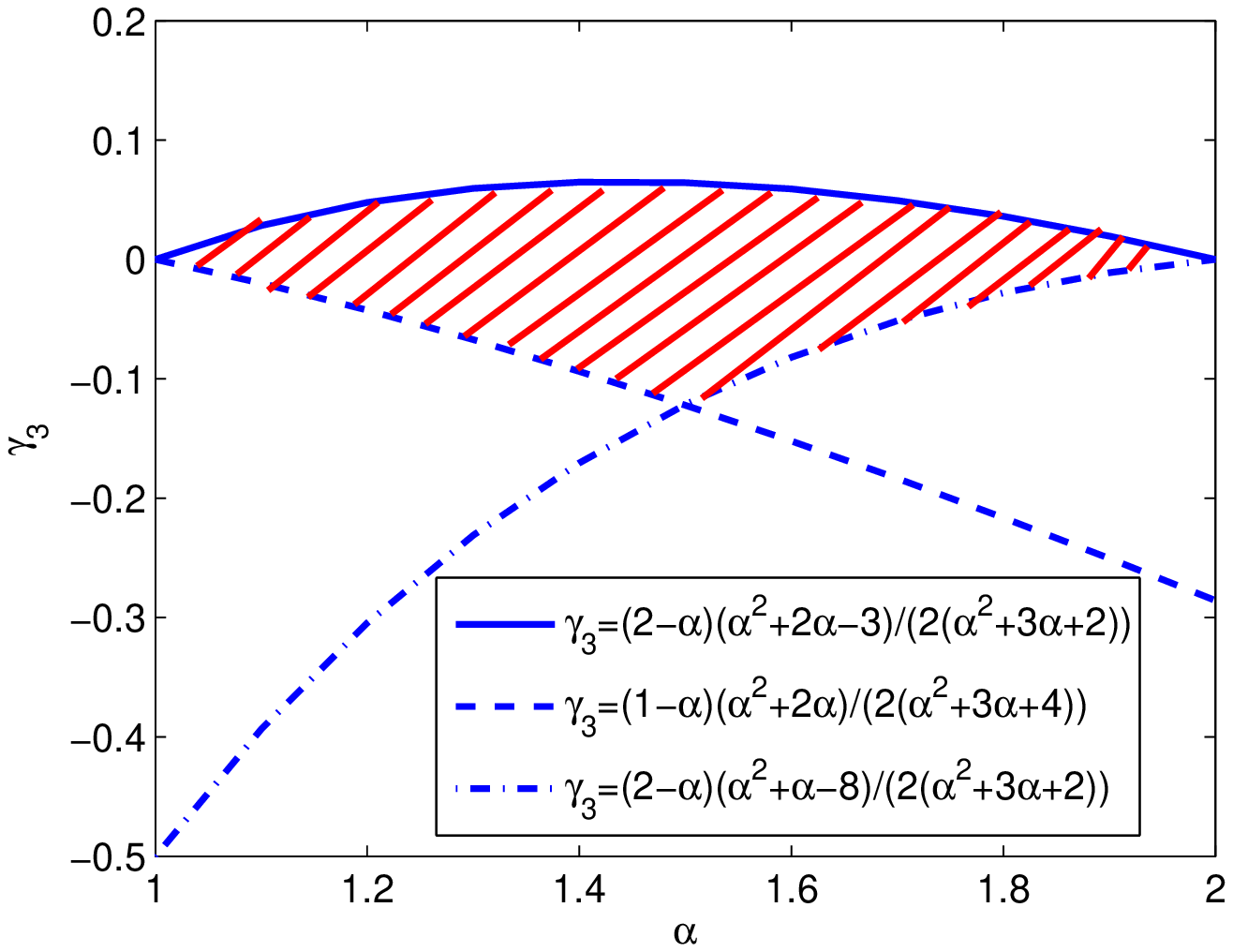}
        \end{minipage}}\hspace*{-20pt}%
    \caption{The bounds of $\gamma_1,\gamma_2,$ and $\gamma_3$  described in Lemma \ref{ww}.}\label{fig:one}
\end{figure}
\begin{lemma}\label{ww}
The weights appeared in (\ref{relation}) with $1<\alpha<2$ satisfy

(1).  $ w^{(\alpha)}_{0}=1,\, w^{(\alpha)}_{1}=-\alpha<0,\,w^{(\alpha)}_{k}=\big(1-\frac{1+\alpha}{k}\big)w^{(\alpha)}_{k-1}\,(k\geq1)$; 
$1\geq w^{(\alpha)}_{2}\geq w^{(\alpha)}_{3}\geq\ldots\geq0,\,\sum\limits_{k=0}^{\infty}w^{(\alpha)}_{k}= 0,\,\sum\limits_{k=0}^{m}w^{(\alpha)}_{k}< 0~(m\geq 1)$;

(2). For\, $h>0,\,\lambda\geq0$, if $\gamma_{1},\gamma_{2}$ and $\gamma_{3}$ are chosen in set
$\mathcal{S}^{\alpha}_{1}(\gamma_1,\gamma_2,\gamma_3)$
 with $\max\big\{\frac{2(\alpha^2+3\alpha-4)}{\alpha^2+3\alpha+2},
\frac{\alpha^2+3\alpha}{\alpha^2+3\alpha+4}\big\}\leq\gamma_1 \leq
\frac{3(\alpha^2+3\alpha-2)}{2(\alpha^2+3\alpha+2)}$,
 or\, set\,
$\mathcal{S}^{\alpha}_{2}(\gamma_1,\gamma_2,\gamma_3)$ with
$\frac{(\alpha-4)(\alpha^2+3\alpha+2)+24}{2(\alpha^2+3\alpha+2)}\leq\gamma_2
\leq
\min\big\{\frac{(\alpha-2)(\alpha^2+3\alpha+4)+16}{2(\alpha^2+3\alpha+4)},
\frac{(\alpha-6)(\alpha^2+3\alpha+2)+48}{2(\alpha^2+3\alpha+2)}\big\}$,
or set
$\mathcal{S}^{\alpha}_{3}(\gamma_1,\gamma_2,\gamma_3)$ with
$ \max\big\{\frac{(2-\alpha)(\alpha^2+\alpha-8)}{\alpha^2+3\alpha+2},
\frac{(1-\alpha)(\alpha^2+2\alpha)}{2(\alpha^2+3\alpha+4)}\big\}
\leq \gamma_3
\leq\frac{(2-\alpha)(\alpha^2+2\alpha-3)}{2(\alpha^2+3\alpha+2)}$,
then there exist
\begin{equation}\label{gg}
g_1^{(\alpha)}\leq0,\,g_2^{(\alpha)}+g_0^{(\alpha)}\geq0,\,g_k^{(\alpha)}\geq0\,\,(k\geq3).
\end{equation}
\end{lemma}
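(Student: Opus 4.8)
The plan is to handle the two parts separately: part (1) records elementary facts about the Gr\"unwald weights $w_k^{(\alpha)}=(-1)^k\binom{\alpha}{k}$, while part (2) reduces each sign requirement in \eqref{gg} to an explicit linear inequality in the single free weight of the parameter set.

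For part (1), the recurrence $w_k^{(\alpha)}=\big(1-\tfrac{1+\alpha}{k}\big)w_{k-1}^{(\alpha)}$ follows at once from the ratio of successive binomial coefficients, and gives $w_0^{(\alpha)}=1$, $w_1^{(\alpha)}=-\alpha$. Since $1<\alpha<2$, the factor $1-\tfrac{1+\alpha}{k}$ is negative only at $k=2$ and lies in $(0,1)$ for $k\geq3$; hence $w_2^{(\alpha)}=\tfrac{\alpha(\alpha-1)}{2}>0$ and $w_2^{(\alpha)}\geq w_3^{(\alpha)}\geq\cdots>0$ is positive and nonincreasing with $w_2^{(\alpha)}\leq1$. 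The total sum is read off from the generating function $\sum_k w_k^{(\alpha)}z^k=(1-z)^\alpha$ at $z=1$, giving $\sum_{k\geq0}w_k^{(\alpha)}=0$; and for $m\geq1$ I would write $\sum_{k=0}^m w_k^{(\alpha)}=-\sum_{k\geq m+1}w_k^{(\alpha)}<0$, the tail being a sum over indices $\geq2$ that are nonnegative with at least one strictly positive.

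For part (2) I would fix the set $\mathcal{S}_1^{\alpha}$ and recover the bounds for $\mathcal{S}_2^{\alpha},\mathcal{S}_3^{\alpha}$ afterwards by the linear substitutions in \eqref{02parameters} and \eqref{2parameters}, since all three describe the same line \eqref{lambdaparameter}. Using $w_0^{(\alpha)}=1$, $w_1^{(\alpha)}=-\alpha$ and $\gamma_2=\tfrac{2+\alpha}{2}-2\gamma_1$, the condition $g_1^{(\alpha)}=\gamma_2-\alpha\gamma_1\leq0$ collapses to $\gamma_1\geq\tfrac12$. For the pair $g_0^{(\alpha)}+g_2^{(\alpha)}$ I would write $g_0^{(\alpha)}+g_2^{(\alpha)}=\gamma_1e^{h\lambda}+C_2e^{-h\lambda}$ with $C_2=\gamma_1 w_2^{(\alpha)}-\alpha\gamma_2+\gamma_3$; because $\gamma_1\geq0$ and $e^{2h\lambda}\geq1$, one has $e^{h\lambda}\big(g_0^{(\alpha)}+g_2^{(\alpha)}\big)\geq\gamma_1+C_2$, so it suffices that $\gamma_1+C_2\geq0$, which reduces to $\gamma_1\geq\tfrac{\alpha^2+3\alpha}{\alpha^2+3\alpha+4}$. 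For $k\geq2$ the exponential prefactor in \eqref{relation} is positive, so $g_k^{(\alpha)}$ has the sign of $L_k:=\gamma_1 w_k^{(\alpha)}+\gamma_2 w_{k-1}^{(\alpha)}+\gamma_3 w_{k-2}^{(\alpha)}$; the term $k=3$ contains $w_1^{(\alpha)}<0$ and is decreasing in $\gamma_1$, so $g_3^{(\alpha)}\geq0$ produces the upper bound $\gamma_1\leq\tfrac{3(\alpha^2+3\alpha-2)}{2(\alpha^2+3\alpha+2)}$.

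The main obstacle is the infinite family $g_k^{(\alpha)}\geq0$, $k\geq4$, where all three weights are positive. Here I would factor $L_k=w_{k-2}^{(\alpha)}\,\phi(k)$ with $\phi(k)=\gamma_3+\gamma_2 r_{k-1}+\gamma_1 r_{k-1}r_k$ and $r_k=1-\tfrac{1+\alpha}{k}\in(0,1)$ increasing to $1$. Under the hypothesized bounds one has $\gamma_1\geq0$ and $\gamma_2=\tfrac{2+\alpha}{2}-2\gamma_1\geq0$, so each of $\gamma_2 r_{k-1}$ and $\gamma_1 r_{k-1}r_k$ increases in $k$; hence $\phi$ is monotone with $\lim_{k\to\infty}\phi(k)=\gamma_1+\gamma_2+\gamma_3=1>0$, its minimum over $k\geq4$ sits at $k=4$, and the whole family collapses to the single inequality $g_4^{(\alpha)}\geq0$, which yields $\gamma_1\geq\tfrac{2(\alpha^2+3\alpha-4)}{\alpha^2+3\alpha+2}$. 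Intersecting the constraints and noting $\tfrac{\alpha^2+3\alpha}{\alpha^2+3\alpha+4}\geq\tfrac12$ for $\alpha\geq1$ (so the $g_1^{(\alpha)}$ bound is absorbed) gives exactly the stated window for $\gamma_1$; substituting $\gamma_1=\tfrac{2+\alpha}{4}-\tfrac{\gamma_2}{2}$ and $\gamma_1=\tfrac{\alpha}{2}+\gamma_3$ then converts it into the windows for $\gamma_2$ and $\gamma_3$. The one point genuinely requiring care is the monotonicity of $\phi$ (equivalently, that the minimizing index is $k=4$), since that is what reduces the infinitely many inequalities to a single finite check.
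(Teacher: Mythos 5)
Your proposal is correct and follows essentially the same route as the paper: part (1) via the recurrence and generating function (the paper simply cites \cite{Meerschaert:04,Podlubny:99}), and part (2) by working in $\mathcal{S}^{\alpha}_{1}$, treating $g_1^{(\alpha)}$, $g_0^{(\alpha)}+g_2^{(\alpha)}$ and $g_3^{(\alpha)}$ exactly as the paper does, and then using for $k\geq4$ the identical factorization $g_k^{(\alpha)}=\big(\gamma_1 r_k r_{k-1}+\gamma_2 r_{k-1}+\gamma_3\big)w_{k-2}^{(\alpha)}e^{-(k-1)h\lambda}$ with $r_k=\frac{k-1-\alpha}{k}$. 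The only variation is the final step—you show $\phi(k)=\gamma_3+\gamma_2 r_{k-1}+\gamma_1 r_{k-1}r_k$ is nondecreasing using $\gamma_1,\gamma_2\geq0$, whereas the paper substitutes the $\mathcal{S}^{\alpha}_{1}$ relations and shows the explicit lower bound $\frac{k(\alpha^2+3\alpha+4-2k)}{2(\alpha^2+3\alpha+2)}$ for $\gamma_1$ decreases in $k$—and your implicit claim $\gamma_2\geq0$ does hold on the stated window, since $\frac{3(\alpha^2+3\alpha-2)}{2(\alpha^2+3\alpha+2)}\leq\frac{2+\alpha}{4}$ for $1<\alpha<2$.
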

\begin{proof} For the proof of the first part of this lemma, one can see \cite{Meerschaert:04,Podlubny:99}.
For the second part of this lemma, we only prove the conclusion for $\gamma_{1},\gamma_{2}$ and $\gamma_{3}$ selected in set
$\mathcal{S}^{\alpha}_{1}(\gamma_1,\gamma_2,\gamma_3)$. The conclusions for $\gamma_{1},\gamma_{2}$ and $\gamma_{3}$ selected in sets $\mathcal{S}^{\alpha}_{2}(\gamma_1,\gamma_2,\gamma_3)$ and $\mathcal{S}^{\alpha}_{3}(\gamma_1,\gamma_2,\gamma_3)$ can be proved in a similar manner. According to \eqref{lambdaparameter},  we deduce that
\begin{equation}\label{ww1}
\begin{aligned}
g_1^{(\alpha)}&=\big(-\alpha\gamma_{1}
+\gamma_{2}\big)e^{-h\lambda}\\
&=\bigg(-(2+\alpha)\gamma_{1}
+\frac{2+\alpha}{2}\bigg)e^{-h\lambda}.
\end{aligned}
\end{equation}
Obviously, $\gamma_{1}\geq\frac{1}{2}$ implies $g_1^{(\alpha)}\leq0$.
Noting \eqref{lambdaparameter}, we see that
\begin{equation}\label{ww02}
\begin{aligned}
g_2^{(\alpha)}+g_0^{(\alpha)}&=\bigg(\frac{\alpha^2-\alpha}{2}\gamma_{1}
-\alpha\gamma_{2}+\gamma_{3}\bigg)e^{-h\lambda}+\gamma_{1}e^{h\lambda}\\
&\geq\bigg(\frac{\alpha^2-\alpha+2}{2}\gamma_{1}
-\alpha\gamma_{2}+\gamma_{3}\bigg)e^{-h\lambda}\\
&=\bigg(\frac{\alpha^2+3\alpha+4}{2}\gamma_{1}
-\frac{\alpha^2+3\alpha}{2}\bigg)e^{-h\lambda}\geq0,
 \end{aligned}
\end{equation}
 if $\gamma_{1}\geq\frac{\alpha^2+3\alpha}{\alpha^2+3\alpha+4}$.
In view of \eqref{lambdaparameter}, by a straightforward calculation, we obtain
\begin{equation}\label{ww3}
\begin{aligned}
g_3^{(\alpha)}&=\bigg(\frac{(2-\alpha)(\alpha-1)\alpha}{6}\gamma_{1}+\frac{\alpha^2-\alpha}{2}\gamma_{2}
-\alpha\gamma_{3}\bigg)e^{-2h\lambda}\\
&=\bigg(-\frac{\alpha(\alpha^2+3\alpha+2)}{6}\gamma_{1}
+\frac{\alpha(\alpha^2+3\alpha-2)}{4}\bigg)e^{-2h\lambda}\geq0,
 \end{aligned}
\end{equation}
 if $\gamma_{1}\leq\frac{3(\alpha^2+3\alpha-2)}{2(\alpha^2+3\alpha+2)}$.
More generally, for $k\geq4$, using the recurrence relation of $w^{(\alpha)}_{k}$, we have
\begin{equation}\label{wwk}
\begin{aligned}
g_k^{(\alpha)}&=\big(\gamma_{1}w^{(\alpha)}_{k}
+\gamma_{2}w^{(\alpha)}_{k-1}
    +\gamma_{3}w_{k-2}^{(\alpha)}\big)e^{-(k-1)h\lambda}\\
&=\bigg(\frac{(k-1-\alpha)(k-2-\alpha)}{k(k-1)}\gamma_{1}+\frac{k-2-\alpha}{k-1}\gamma_{2}
+\gamma_{3}\bigg)w_{k-2}^{(\alpha)}e^{-(k-1)h\lambda}\\
&=\bigg(\frac{\alpha^2+3\alpha+2}{k(k-1)}\gamma_{1}
+\frac{-\alpha^2-3\alpha+2k-4}{2(k-1)}\bigg)w_{k-2}^{(\alpha)}e^{-(k-1)h\lambda}\geq0,
 \end{aligned}
\end{equation}
if $\gamma_{1}\geq \frac{k(\alpha^2+3\alpha+4-2k)}{2(\alpha^2+3\alpha+2)}$.
It is easy to check that the bound $\frac{k(\alpha^2+3\alpha+4-2k)}{2(\alpha^2+3\alpha+2)}$
is decreasing with respect to the variable $k\,\,(k\geq 2)$ for $1<\alpha<2$.
 Combining the above formulas, we obtain the desired bounds of $\gamma_1$.
\end{proof}
\begin{remark}\label{remarkq}
The bounds of $\gamma_{1},\gamma_{2}$ and $\gamma_{3}$ are illustrated in Figure \ref{fig:one}.
For the Riemann-Liouville fractional calculus (i.e., $\lambda=0$), the restrictions for $\gamma_{1},\gamma_{2}$ and $\gamma_{3}$ given in Lemma \ref{ww} can be relaxed when using the generating function method \cite{Tian:11} to prove the numerical stability of time dependent PDE. And if the parameters $\gamma_{j},\,p_j,\,j=1,2,3,$ in (\ref{rWSGD}) and
(\ref{thm01}) are taken as
 \begin{equation}\label{Q12parametersa}
p_1=1,~p_2=0,~p_3=-1,~\gamma_1=\frac{3\alpha^2+5\alpha}{24},
~\gamma_2=\frac{12-3\alpha^2+\alpha}{12},~\gamma_{3}=\frac{3\alpha^2-7\alpha}{24},
\end{equation}
then the corresponding tempered-WSGD operators have third order
accuracy. It is easy to check that the parameters do not fall in the domains described in
Lemma \ref{ww}. Using the difference formulae \eqref{Q12parametersa} to approximate the
tempered fractional derivatives for fractional diffusion equations seems not to be stable.
In the next section, we select the stability ones to solve the time
dependent tempered fractional PDEs.
\end{remark}
\section{Numerical schemes for the tempered
fractional diffusion equation}\label{sec:3}
In this section, we apply the second order approximations of the
Riemann-Liouville tempered fractional derivative presented in \eqref{threeGLD} to the following tempered fractional diffusion equation
\begin{equation}\label{Problem}
  \begin{cases}
    \frac{\partial u(x,t)}{\partial t}=(l~{_{a}}\textbf{D}_x^{\alpha,\lambda}+r~{_x}\textbf{D}_{b}^{\alpha,\lambda})u(x,t)+s(x,t),
     &   \text{$(x,  t) \in (a,b)\times [0, T]$,} \\
    u(x, 0)=u_0(x),    & \text{$x\in (a,b)$},\\
    u(a,t)=\phi_{l}(t),~~u(b,t)=\phi_{r}(t), &\text{$t\in[0, T]$},
  \end{cases}
\end{equation}
where $u=u(x,t)$ is the concentration of a solute at a point $x$ at
time $t$; $s(x,t)$ is the source term;
and the weighting factors $l,r$ usually control the bias of the
dispersion. The diffusion coefficients $l$ and $r$ are nonnegative
constants with $l+r=1$. And if $l\neq0$, then $\phi_l(t)\equiv0$;
if $r\neq0$, then $\phi_r(t)\equiv0$. Next we discretize
 \eqref{Problem} by the second order tempered-WSGD operators given
 in \eqref{threeGLD}. In the following numerical analysis, we assume that \eqref{Problem} has an unique and
sufficiently smooth solution.

\subsection{Crank-Nicolson-tempered-WSGD schemes}
To derive the numerical schemes for problem \eqref{Problem}, we
first introduce some notations used later. The spatial interval $[a,
b]$ is divided into ${N_x}$ parts by the uniform mesh with the space
step $h=(b-a)/{N_x}$ and the temporal interval is partitioned into
${N_t}$ parts using the grid-points $t_n=n\tau$, where the
equidistant temporal step gives $\tau=T/{N_t}$. And the set of grid
points are denoted by $x_j=a+jh$ and $t_n=n\tau$ for $1\leq j\leq
{N_x}$ and $0\leq n\leq {N_t}$.
Denoting $t_{n+1/2}=(t_n+t_{n+1})/2$ and setting $
  u_j^n=u(x_j, t_n),u_j^n=(u_j^n+u_j^{n+1})/2, s_j^{n}=s(x_j, t_{n}),
$ we get the following Crank-Nicolson time discretization for
\eqref{Problem} at mesh point $(x_{j},t_{n})$:
\begin{equation*}
\frac{ u_j^{n+1}-u_j^n}{\tau}-\Big(l({_a}\textbf{D}_x^{\alpha,\lambda}u)_j^{n+1/2}
  +r({_x}\textbf{D}_b^{\alpha,\lambda}u)_j^{n+1/2} \Big)=s_j^{n+1/2}+O(\tau^2).
\end{equation*}
Using the tempered-WSGD operators
$_{_L}\mathcal{D}_{h,1,0,-1}^{\alpha,\gamma_{1},\gamma_{2},\gamma_{3}}u(x,
t)$ and
${_R}\mathcal{D}_{h,1,0,-1}^{\alpha,\gamma_{1},\gamma_{2},\gamma_{3}}u(x,t)$
to approximate the space Riemann-Liouville tempered fractional
derivatives ${_a}\textbf{D}_{x}^{\alpha,\lambda}u(x,t)$ and
${_x}\textbf{D}_{b}^{\alpha,\lambda}u(x,t)$, respectively, yields
\begin{equation}\label{scheme}
\begin{aligned}
     & \frac{ u_j^{n+1}-u_j^n}{\tau}-\Big(l~_{_L}\mathcal{D}_{h,1,0,-1}^{\alpha,\gamma_{1},\gamma_{2},\gamma_{3}}u_j^{n+1/2}
      +r~{_R}\mathcal{D}_{h,1,0,-1}^{\alpha,\gamma_{1},\gamma_{2},\gamma_{3}}u_j^{n+1/2}\Big)+\alpha \lambda^{\alpha-1}(l-r)~\delta_xu_{j}^{n+1/2}\\
      &=s_j^{n+1/2}+O(\tau^2+
      h^2),
\end{aligned}
\end{equation}
where $\delta_xu_{j}^{n}=(u_{j+1}^{n}-u_{j-1}^{n})/{2h}$.
Rearranging the above discretization (\ref{scheme}) leads to
\begin{equation}\label{Eq3.3}
\begin{aligned}
     &u_j^{n+1}-l~\tau
_{_L}\mathcal{D}_{h,1,0,-1}^{\alpha,\gamma_{1},\gamma_{2},\gamma_{3}}u_j^{n+1/2}
-r~\tau
{_R}\mathcal{D}_{h,1,0,-1}^{\alpha,\gamma_{1},\gamma_{2},\gamma_{3}}u_j^{n+1/2}
+\tau \alpha \lambda^{\alpha-1}(l-r)~\delta_xu_{j}^{n+1/2}\\
      &=u_j^n+
\tau s_j^{n+1/2}+O(\tau^3+\tau h^2).
\end{aligned}
\end{equation}
From \eqref{threeGLD}, we can recast (\ref{Eq3.3}) as
\begin{equation}\label{scheme1}
\begin{aligned}
     & u_j^{n+1}-\frac{l\tau}{h^{\alpha}}\sum_{k=0}^{j+1}g_k^{(\alpha)}u_{j-k+1}^{n+1/2}
        -\frac{r\tau}{h^{\alpha}}\sum_{k=0}^{{N_x}-j+1}g_k^{(\alpha)}u_{j+k-1}^{n+1/2}+\tau \alpha \lambda^{\alpha-1}(l-r)~\delta_xu_{j}^{n+1/2}\\
      &=u_j^n+\tau s_j^{n+1/2}+O(\tau^3+\tau h^2).
\end{aligned}
\end{equation}
Denoting $U_j^n$ as the numerical approximation of $u_j^n$ and
omitting the local truncation errors, we get the
Crank-Nicolson-tempered-WSGD scheme of \eqref{Problem} being given
by
\begin{equation}\label{scheme1I}
      U_j^{n+1}-\frac{l~\tau}{h^{\alpha}}\sum_{k=0}^{j+1}g_k^{(\alpha)}U_{j-k+1}^{n+1/2}
        -\frac{r~\tau}{h^{\alpha}}\sum_{k=0}^{{N_x}-j+1}g_k^{(\alpha)}U_{j+k-1}^{n+1/2} \\
      +\tau \alpha \lambda^{\alpha-1}(l-r)~\frac{U_{j+1}^{n+1/2}-U_{j-1}^{n+1/2}}{2h}=U_j^n+\tau s_j^{n+1/2}.
\end{equation}
For the convenience of implementation, we also introduce the matrix
form of the grid functions
\begin{equation*}
  U^n=\Big(U_1^n, U_2^n,\cdots, U_{{N_x}-1}^n\Big)^{\mathrm{T}}.
\end{equation*}
Then the numerical scheme \eqref{scheme1I} can be rewritten as
\begin{equation}\label{schemem}
  \Big(I-\frac{\tau}{2h^{\alpha}}(l~A+r~A^{\mathrm{T}})-\frac{\tau \alpha \lambda^{\alpha-1}(r-l)}{4h}B\Big)U^{n+1}
  =\Big(I+\frac{\tau}{2h^{\alpha}}(l~A+r~A^{\mathrm{T}})+\frac{\tau \alpha \lambda^{\alpha-1}(r-l)}{4h}B\Big)U^{n}+\tau F^{n+1/2},
\end{equation}
where the
matrix $A=\big(a_{m,j}\big)_{N_x-1,N_x-1}$ with the entries
  \begin{equation}\label{matrixA}
  a_{m,j}=
  \begin{cases}
    0,&   \text{$j>m+1$,} \\
    g_{0}^{(\alpha)},    & \text{$j=m+1$},\\
    g_{1}^{(\alpha)}
    -(l+r)(\gamma_{1}e^{h\lambda}+\gamma_{2}+\gamma_{3}e^{-h\lambda})(1-e^{-h\lambda})^\alpha,    & \text{$j=m$},\\
    g_{2}^{(\alpha)},    & \text{$j=m-1$},\\
    g_{m-j+1}^{(\alpha)}, &\text{$j\leq m-2$},
  \end{cases}
\end{equation}
 and  $B=tridiag\{-1,0,1\},$
is a symmetric tri-diagonal matrix of $N_x-1$-square.
The term $F^{n+1/2}$ gives
\begin{equation*}
\begin{aligned}
F^{n+1/2}&=\begin{pmatrix}
   s_{1}^{n+1/2}\\
     s_{2}^{n+1/2}\\
   \vdots \\
     s_{{N_x}-2}^{n+1/2}\\
     s_{{N_x}-1}^{n+1/2}
 \end{pmatrix}+
 \frac{U_{0}^{n+1/2}}{2h^{\alpha}}
 \begin{pmatrix}
   l~g_2^{(\alpha)}+r~g_0^{(\alpha)}\\
   l~g_3^{(\alpha)}\\
   \vdots \\
   l~g_{{N_x}-1}^{(\alpha)}\\
   l~g_{N_x}^{(\alpha)}
 \end{pmatrix}+
 \frac{U_{{N_x}}^{n+1/2}}{2h^{\alpha}}
 \begin{pmatrix}
   r~g_{N_x}^{(\alpha)}\\
   r~g_{{N_x}-1}^{(\alpha)}\\
   \vdots \\
   r~g_3^{(\alpha)}\\
   l~g_0^{(\alpha)}+r~g_2^{(\alpha)}
 \end{pmatrix}
+\frac{\alpha \lambda^{\alpha-1}(r-l)}{4h}
 \begin{pmatrix}
   U_{{0}}^{n+1/2}\\
   0\\
   \vdots \\
   0\\
   -U_{{N_x}}^{n+1/2}
 \end{pmatrix}.
\end{aligned}
\end{equation*}

\subsection{Stability and convergence}
Now we discuss the numerical stability and convergence for the
Crank-Nicolson-tempered-WSGD schemes \eqref{scheme1I}. We explore
the properties of the eigenvalues of the iterative matrix of
\eqref{scheme1I} on the grid points
$\{x_j=a+jh,\,h=(b-a)/N_{x},j=1,2,\ldots,N_{x}-1\}$. If the real parts
of the eigenvalues are negative, then the schemes are stable. First,
we introduce several lemmas.

\begin{lemma}[\cite{Quarteroni:07}]\label{lem3}
  A real matrix $A$ of order $n$ is positive definite if and
 only if its symmetric part $H=\frac{A+A^T}{2}$ is positive definite;
 $H$ is positive definite if and only if the eigenvalues of $H$ are positive.
\end{lemma}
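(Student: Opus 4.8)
The plan is to work directly from the definition that a real matrix $M$ of order $n$ is positive definite precisely when $x^{\mathrm{T}}Mx>0$ for every nonzero $x\in\mathbb{R}^n$, and to treat the two claimed equivalences in turn. Fixing this convention is in fact the only delicate point of the whole statement, since for a possibly non-symmetric $A$ the phrase ``positive definite'' must refer to the real quadratic form $x^{\mathrm{T}}Ax$ rather than to any symmetry of $A$ itself; it is exactly this reading that makes the symmetrization step below legitimate.

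For the first equivalence, the key observation is that $x^{\mathrm{T}}Ax$ is a scalar and hence equal to its own transpose, so that $x^{\mathrm{T}}Ax=(x^{\mathrm{T}}Ax)^{\mathrm{T}}=x^{\mathrm{T}}A^{\mathrm{T}}x$ for every real vector $x$. Averaging these two expressions would then give
\[
x^{\mathrm{T}}Ax=\frac{1}{2}\big(x^{\mathrm{T}}Ax+x^{\mathrm{T}}A^{\mathrm{T}}x\big)=x^{\mathrm{T}}\,\frac{A+A^{\mathrm{T}}}{2}\,x=x^{\mathrm{T}}Hx.
\]
Since the quadratic forms of $A$ and of $H$ thus coincide on all of $\mathbb{R}^n$, the inequality $x^{\mathrm{T}}Ax>0$ holds for every nonzero $x$ if and only if $x^{\mathrm{T}}Hx>0$ holds for every nonzero $x$; that is, $A$ is positive definite if and only if $H$ is. This settles the first ``if and only if'' with essentially no computation.

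For the second equivalence I would use that $H$ is, by construction, a real symmetric matrix, and invoke the spectral theorem: there is an orthogonal matrix $Q$ with $H=Q\Lambda Q^{\mathrm{T}}$, where $\Lambda=\mathrm{diag}(\lambda_1,\dots,\lambda_n)$ collects the necessarily real eigenvalues of $H$. Substituting $y=Q^{\mathrm{T}}x$ yields $x^{\mathrm{T}}Hx=y^{\mathrm{T}}\Lambda y=\sum_{i=1}^{n}\lambda_i y_i^2$, and $x\mapsto y$ is a bijection of $\mathbb{R}^n$ that maps nonzero vectors to nonzero vectors. If every $\lambda_i>0$ then this sum is strictly positive for all nonzero $y$, so $H$ is positive definite; conversely, if some $\lambda_i\le 0$, choosing $x$ to be the corresponding column of $Q$ (so that $y=e_i$) gives $x^{\mathrm{T}}Hx=\lambda_i\le 0$, contradicting positive definiteness. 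Hence $H$ is positive definite exactly when all its eigenvalues are positive.

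I do not anticipate any genuine obstacle: the result is a standard fact of linear algebra, and the entire argument rests on the scalar identity $x^{\mathrm{T}}Ax=x^{\mathrm{T}}A^{\mathrm{T}}x$ together with the spectral theorem for real symmetric matrices. The statement is invoked here only as a convenient criterion, to be applied later to the symmetric part of the iteration matrix arising from the Crank--Nicolson--tempered-WSGD scheme.
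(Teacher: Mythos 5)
Your proposal is correct, and there is nothing in the paper to compare it against: Lemma \ref{lem3} is stated without proof, being imported by citation from Quarteroni et al.\ \cite{Quarteroni:07}. Your argument is the standard one --- the scalar identity $x^{\mathrm{T}}Ax = x^{\mathrm{T}}A^{\mathrm{T}}x$ reducing the first equivalence to the coincidence of the quadratic forms of $A$ and $H$, and the spectral theorem handling the second --- and you were right to flag the one delicate point, namely that ``positive definite'' for a non-symmetric real $A$ must be read as positivity of the quadratic form $x^{\mathrm{T}}Ax$, which is exactly the reading under which the lemma is applied in Theorem \ref{thmone} to the non-symmetric matrix $M$.
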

\begin{lemma}[\cite{Quarteroni:07}]\label{thm:2}
  If $A\in\mathbb{C}^{n\times n}$, let $H=\frac{A+A^*}{2}$
  be the hermitian part of $A$, $A^*$ the conjugate transpose of $A$,
 then for any eigenvalue $\mu$ of $A$,
  there exists
  \begin{equation*}
    \mu_{\min}(H)\le\mathrm{Re}(\mu(A))\le\mu_{\max}(H),
  \end{equation*}
  where $\mathrm{Re(\mu(A))}$ represents the real part of $\mu$, and
  $\mu_{\min}(H)$ and
  $\mu_{\max}(H)$ are the minimum and maximum of the eigenvalues of $H$.
\end{lemma}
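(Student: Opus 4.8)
The plan is to identify the real part of any eigenvalue of $A$ with a value of the Rayleigh quotient of the Hermitian part $H$, and then to invoke the variational (Rayleigh--Ritz) bounds for Hermitian matrices. Since $H=\frac{A+A^*}{2}$ is Hermitian, the spectral theorem guarantees that its eigenvalues are real, so $\mu_{\min}(H)$ and $\mu_{\max}(H)$ are well defined, and for every unit vector $x\in\mathbb{C}^n$ one has $\mu_{\min}(H)\le x^*Hx\le\mu_{\max}(H)$. The entire argument then reduces to exhibiting $\mathrm{Re}(\mu)$ as such a quotient.

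First I would fix an eigenvalue $\mu$ of $A$ together with a corresponding eigenvector $x$ normalized by $x^*x=1$, so that $Ax=\mu x$. Multiplying on the left by $x^*$ gives $x^*Ax=\mu\,x^*x=\mu$. Taking the complex conjugate of this scalar, and using $(x^*Ax)^*=x^*A^*x$, yields $x^*A^*x=\overline{\mu}$. Averaging the two identities produces
\begin{equation*}
\mathrm{Re}(\mu)=\frac{\mu+\overline{\mu}}{2}=\frac{x^*Ax+x^*A^*x}{2}=x^*\Big(\frac{A+A^*}{2}\Big)x=x^*Hx,
\end{equation*}
so the real part of any eigenvalue of $A$ is exactly the Rayleigh quotient of its eigenvector with respect to $H$.

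It remains only to bound $x^*Hx$. Because $H$ is Hermitian it is unitarily diagonalizable, $H=Q\Lambda Q^*$ with $Q$ unitary and $\Lambda=\mathrm{diag}(\lambda_1,\ldots,\lambda_n)$ real; setting $y=Q^*x$, which is again a unit vector, gives $x^*Hx=\sum_{k=1}^{n}\lambda_k|y_k|^2$. Since $\sum_{k}|y_k|^2=1$, this is a convex combination of the real numbers $\lambda_k$, whence $\mu_{\min}(H)\le x^*Hx\le\mu_{\max}(H)$. Combining this with the displayed identity yields the claimed inequality.

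There is no substantive obstacle here: the only fact one must lean on is the spectral theorem for Hermitian matrices, which simultaneously ensures that $H$ has real eigenvalues and an orthonormal eigenbasis, and hence that the extremal quantities $\mu_{\min}(H),\mu_{\max}(H)$ appearing in the statement are meaningful and real. For the purpose of the subsequent stability analysis it is worth recording the immediate consequence that if $\mu_{\max}(H)<0$ then $\mathrm{Re}(\mu)<0$ for every eigenvalue $\mu$ of $A$, which is precisely the criterion to be applied to the iteration matrix of scheme \eqref{scheme1I}.
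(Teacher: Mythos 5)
Your proof is correct. The paper gives no proof of this lemma at all---it is imported as a known result from \cite{Quarteroni:07}---and your argument (the identity $\mathrm{Re}(\mu)=x^{*}Hx$ for a unit eigenvector $x$ of $A$, obtained by averaging $x^{*}Ax=\mu$ with its conjugate $x^{*}A^{*}x=\overline{\mu}$, followed by the Rayleigh--Ritz bounds for the Hermitian matrix $H$) is exactly the standard textbook proof of this classical fact, with every step, including the convex-combination bound after unitary diagonalization, checking out.
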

\begin{theorem}\label{thmm}
 Let the martrices $A=\big(a_{m,j}\big)_{N_x,N_x}, A^{T}=\big(a_{j,m}\big)_{N_x,N_x} $ be given in numerical scheme \eqref{schemem}.
If $\gamma_{1},\gamma_{2}$ and $\gamma_{3}$ are chosen in set
$\mathcal{S}^{\alpha}_{1}(\gamma_1,\gamma_2,\gamma_3)$
 with $\max\big\{\frac{2(\alpha^2+3\alpha-4)}{\alpha^2+3\alpha+2},
\frac{\alpha^2+3\alpha}{\alpha^2+3\alpha+4}\big\}\leq\gamma_1\leq
\frac{3(\alpha^2+3\alpha-2)}{2(\alpha^2+3\alpha+2)},$
 or\, set\,
$\mathcal{S}^{\alpha}_{2}(\gamma_1,\gamma_2,\gamma_3)$ with
$\frac{(\alpha-4)(\alpha^2+3\alpha+2)+24}{2(\alpha^2+3\alpha+2)}\leq\gamma_2
\leq
\min\big\{\frac{(\alpha-2)(\alpha^2+3\alpha+4)+16}{2(\alpha^2+3\alpha+4)},
\frac{(\alpha-6)(\alpha^2+3\alpha+2)+48}{2(\alpha^2+3\alpha+2)}\big\}$,
or set
$\mathcal{S}^{\alpha}_{3}(\gamma_1,\gamma_2,\gamma_3)$ with
$ \max\big\{\frac{(2-\alpha)(\alpha^2+\alpha-8)}{\alpha^2+3\alpha+2},
\frac{(1-\alpha)(\alpha^2+2\alpha)}{2(\alpha^2+3\alpha+4)}\big\}
\leq \gamma_3
\leq\frac{(2-\alpha)(\alpha^2+2\alpha-3)}{2(\alpha^2+3\alpha+2)}$,
 then the matrix $Q=\frac{A+A^T}{2}$ is diagonally dominant for $1<\alpha<2$ and all the eigenvalues of $Q$ are negative.
\end{theorem}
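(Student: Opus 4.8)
The plan is to read the entries of the symmetric matrix $Q=\frac{A+A^{T}}{2}$ straight out of \eqref{matrixA} and \eqref{relation}, establish strict diagonal dominance with a negative diagonal, and then deduce negativity of the (necessarily real) eigenvalues by Gershgorin's circle theorem. First I would record the nonzero entries of $Q$. Writing $D:=(\gamma_{1}e^{h\lambda}+\gamma_{2}+\gamma_{3}e^{-h\lambda})(1-e^{-h\lambda})^{\alpha}$ for the diagonal correction appearing in \eqref{matrixA}, symmetrizing $A$ gives diagonal $q_{m,m}=g_{1}^{(\alpha)}-D$, first off-diagonal $q_{m,m\pm1}=\tfrac12\big(g_{0}^{(\alpha)}+g_{2}^{(\alpha)}\big)$, and $k$-th off-diagonal $q_{m,m\pm k}=\tfrac12 g_{k+1}^{(\alpha)}$ for $k\ge2$ (the super-diagonal $g_{0}^{(\alpha)}$ of $A$ pairs with the sub-diagonal $g_{2}^{(\alpha)}$, and the zero upper part of $A$ pairs with the lower entries $g_{k+1}^{(\alpha)}$).

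The sign information is then exactly what Lemma \ref{ww} provides for $1<\alpha<2$ and $\gamma_{j}$ in the three prescribed ranges: $g_{1}^{(\alpha)}\le0$, $g_{0}^{(\alpha)}+g_{2}^{(\alpha)}\ge0$ and $g_{k}^{(\alpha)}\ge0$ for $k\ge3$, so every off-diagonal entry of $Q$ is nonnegative. The crux of the whole argument is the identity $\sum_{k=0}^{\infty}g_{k}^{(\alpha)}=D$. I would prove it from the generating function $\sum_{k\ge0}w_{k}^{(\alpha)}z^{k}=(1-z)^{\alpha}$, which converges absolutely at $z=\zeta:=e^{-h\lambda}\in(0,1]$ since $w_{k}^{(\alpha)}=O(k^{-\alpha-1})$. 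Grouping the terms of $\sum_{k}g_{k}^{(\alpha)}$ by $\gamma_{1},\gamma_{2},\gamma_{3}$ as in \eqref{relation}, the three subseries collapse to $\zeta^{-1}(1-\zeta)^{\alpha}$, $(1-\zeta)^{\alpha}$ and $\zeta(1-\zeta)^{\alpha}$, whence $\sum_{k}g_{k}^{(\alpha)}=(1-\zeta)^{\alpha}\big(\gamma_{1}\zeta^{-1}+\gamma_{2}+\gamma_{3}\zeta\big)=D$. I expect this bookkeeping — aligning the tempering factors $e^{-(k-1)h\lambda}$ with the shifted weights so that the series telescope onto $(1-\zeta)^{\alpha}$ — to be the main obstacle, and it is precisely what makes the row balance come out exactly in the bi-infinite limit.

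Granting $\sum_{k}g_{k}^{(\alpha)}=D$, diagonal dominance is immediate. For a row $m$ of the finite matrix the absolute off-diagonal row sum is $R_{m}=\big(g_{0}^{(\alpha)}+g_{2}^{(\alpha)}\big)+\tfrac12\sum_{j=3}^{m}g_{j}^{(\alpha)}+\tfrac12\sum_{j=3}^{N_{x}-m}g_{j}^{(\alpha)}$, while the full bi-infinite off-diagonal sum equals $\big(g_{0}^{(\alpha)}+g_{2}^{(\alpha)}\big)+\sum_{j\ge3}g_{j}^{(\alpha)}=\sum_{k\ge0}g_{k}^{(\alpha)}-g_{1}^{(\alpha)}=D-g_{1}^{(\alpha)}=|q_{m,m}|$. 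Since truncation to a finite matrix discards only tail weights $g_{j}^{(\alpha)}$ that are strictly positive for large $j$ (the bracket in \eqref{wwk} tends to $1$ and $\zeta^{\,j-1}>0$), every row obeys $R_{m}<|q_{m,m}|$, i.e.\ $Q$ is strictly diagonally dominant; combined with $g_{1}^{(\alpha)}\le0$ and $D\ge0$ this also forces $q_{m,m}<0$.

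Finally I would close with Gershgorin. As $Q$ is real symmetric, its eigenvalues are real and each lies in some interval $[q_{m,m}-R_{m},\,q_{m,m}+R_{m}]$; strict dominance with $q_{m,m}<0$ yields $q_{m,m}+R_{m}<q_{m,m}+|q_{m,m}|=0$ for every $m$, so all eigenvalues of $Q$ are negative — equivalently $-Q$ is symmetric positive definite, consistent with Lemma \ref{lem3}. The only auxiliary checks left are routine: that $D\ge0$, which follows because $(1-e^{-h\lambda})^{\alpha}\ge0$ and $t\mapsto\gamma_{1}t+\gamma_{2}+\gamma_{3}t^{-1}$ is nondecreasing on $[1,\infty)$ with value $\gamma_{1}+\gamma_{2}+\gamma_{3}=1$ at $t=1$ (using $\gamma_{1}-\gamma_{3}=\tfrac{\alpha}{2}>0$ and $\gamma_{1}\ge0$ from \eqref{lambdaparameter} and the stated ranges), and that the sign facts of Lemma \ref{ww} hold uniformly across $\mathcal{S}^{\alpha}_{1},\mathcal{S}^{\alpha}_{2},\mathcal{S}^{\alpha}_{3}$.
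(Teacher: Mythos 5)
Your proposal is correct and follows essentially the same route as the paper's proof: identify the entries of $Q=\frac{A+A^{T}}{2}$, use the generating-function identity $\sum_{k\ge0}w_{k}^{(\alpha)}e^{-kh\lambda}=(1-e^{-h\lambda})^{\alpha}$ to show the bi-infinite row sum of the $g_{k}^{(\alpha)}$ equals the diagonal correction $(\gamma_{1}e^{h\lambda}+\gamma_{2}+\gamma_{3}e^{-h\lambda})(1-e^{-h\lambda})^{\alpha}$, invoke the sign facts of Lemma \ref{ww} to get a negative diagonal and nonnegative off-diagonal entries, and conclude strict diagonal dominance and negativity of the eigenvalues via Gershgorin. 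Your monotonicity argument for the positivity of $\gamma_{1}e^{h\lambda}+\gamma_{2}+\gamma_{3}e^{-h\lambda}$ replaces the paper's rewriting via $2\gamma_{1}(\cosh(h\lambda)-1)+1+\frac{\alpha}{2}(1-e^{-h\lambda})$, but this is only a cosmetic variation within the same argument.
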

\begin{proof}
Denote $Q=\frac{A+A^T}{2}=\big(q_{m,j}\big)_{N_x-1,N_x-1}$ with the entries
  \begin{equation}\label{matrixA}
  q_{m,j}=
  \begin{cases}
    \frac{1}{2}g_{j-m+1}^{(\alpha)},&   \text{$j>m+1$,} \\
    \frac{1}{2}(g_{0}^{(\alpha)}+g_{2}^{(\alpha)}),    & \text{$j=m+1$},\\
    g_{1}^{(\alpha)}
    -(l+r)(\gamma_{1}e^{h\lambda}+\gamma_{2}+\gamma_{3}e^{-h\lambda})(1-e^{-h\lambda})^\alpha,    & \text{$j=m$},\\
    \frac{1}{2}(g_{2}^{(\alpha)}+g_{0}^{(\alpha)}),    & \text{$j=m-1$},\\
    \frac{1}{2}g_{m-j+1}^{(\alpha)}, &\text{$j\leq m-2$}.
  \end{cases}
\end{equation}
With the help of the following binomial formula
$$\sum^{\infty}_{m=0}w_{m}^{(\alpha)}e^{-mh\lambda}=(1-e^{-h\lambda})^\alpha,$$
we have
\begin{equation*}
    \begin{aligned}
            \sum^{\infty}_{m=0}g_{m}^{(\alpha)}&=
      \gamma_{1}w^{(\alpha)}_{0}e^{h\lambda}+\gamma_{1}w^{(\alpha)}_{1}+\gamma_{2}w_0^{(\alpha)}
+\sum^{\infty}_{m=2}\big(\gamma_{1}w^{(\alpha)}_{m}
+\gamma_{2}w^{(\alpha)}_{m-1}
    +\gamma_{3}w_{m-2}^{(\alpha)}\big)e^{-(m-1)h\lambda}
          \\&=(\gamma_{1}e^{h\lambda}+\gamma_{2}+\gamma_{3}e^{-h\lambda})(1-e^{-h\lambda})^\alpha.
    \end{aligned}
  \end{equation*}
Furthermore, we get
 $$\sum^{m+1}_{j=-\infty}q_{m,j}=-(\gamma_{1}e^{h\lambda}+\gamma_{2}+\gamma_{3}e^{-h\lambda})(1-e^{-h\lambda})^\alpha+
 \sum^{\infty}_{m=0}g_{m}^{(\alpha)}
 =0.$$
By a straightforward calculation, and using \eqref{lambdaparameter}, we get
$$\gamma_{1}e^{h\lambda}+\gamma_{2}+\gamma_{3}e^{-h\lambda}
=2\gamma_1\big(\cosh(h\lambda)-1\big)+1+\frac{\alpha}{2}(1-e^{-h\lambda})>0~ \text{ with}~1<\alpha<2~\text{and}~\gamma_1>0,$$
where $\cosh(h\lambda)$ denotes the hyperbolic cosine function $\coth(h\lambda)=\frac{e^{h\lambda}+e^{-h\lambda}}{2}$.
Noting that $\gamma_{1},\gamma_{2}$ and $\gamma_{3}$ are chosen in set
$\mathcal{S}^{\alpha}_{1}(\gamma_1,\gamma_2,\gamma_3)$
 or $\mathcal{S}^{\alpha}_{2}(\gamma_1,\gamma_2,\gamma_3)$
 or $\mathcal{S}^{\alpha}_{3}(\gamma_1,\gamma_2,\gamma_3)$, under the assumptions given in Lemma \ref{ww}, we obtain
 $q_{m,m}<0,m=1,2,\ldots,N_x$. Hence,
   $$-q_{m,m}>\sum^{m+1}_{j=0,j\neq m}q_{m,j},$$
  which implies that the matrix $Q$ is diagonally dominant.
Using the Gershgorin theorem \cite{Varga:00}, we deduce that the eigenvalues of matrix $Q$ are negative.
\end{proof}

\begin{theorem} \label{thmone}
Let $\gamma_{1},\gamma_{2}$ and $\gamma_{3}$ be chosen in set
$\mathcal{S}^{\alpha}_{1}(\gamma_1,\gamma_2,\gamma_3)$
 with $\max\big\{\frac{2(\alpha^2+3\alpha-4)}{\alpha^2+3\alpha+2},
\frac{\alpha^2+3\alpha}{\alpha^2+3\alpha+4}\big\}\leq\gamma_1\leq
\frac{3(\alpha^2+3\alpha-2)}{2(\alpha^2+3\alpha+2)},$
 or\, set\,
$\mathcal{S}^{\alpha}_{2}(\gamma_1,\gamma_2,\gamma_3)$ with
$\frac{(\alpha-4)(\alpha^2+3\alpha+2)+24}{2(\alpha^2+3\alpha+2)}\leq\gamma_2
\leq
\min\big\{\frac{(\alpha-2)(\alpha^2+3\alpha+4)+16}{2(\alpha^2+3\alpha+4)},
\frac{(\alpha-6)(\alpha^2+3\alpha+2)+48}{2(\alpha^2+3\alpha+2)}\big\}$,
or set
$\mathcal{S}^{\alpha}_{3}(\gamma_1,\gamma_2,\gamma_3)$ with
$ \max\big\{\frac{(2-\alpha)(\alpha^2+\alpha-8)}{\alpha^2+3\alpha+2},
\frac{(1-\alpha)(\alpha^2+2\alpha)}{2(\alpha^2+3\alpha+4)}\big\}
\leq \gamma_3
\leq\frac{(2-\alpha)(\alpha^2+2\alpha-3)}{2(\alpha^2+3\alpha+2)}$, then the
Crank-Nicolson-tempered-WSGD scheme (\ref{scheme1I}) with $\lambda\geq0$ and $1<\alpha<2$
 is stable.
\end{theorem}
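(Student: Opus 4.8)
The plan is to read the one-step scheme \eqref{schemem} as
\begin{equation*}
(I-M)\,U^{n+1} = (I+M)\,U^{n} + \tau F^{n+1/2},
\qquad
M := \frac{\tau}{2h^{\alpha}}\bigl(l\,A + r\,A^{\mathrm{T}}\bigr) + \frac{\tau\alpha\lambda^{\alpha-1}(r-l)}{4h}B,
\end{equation*}
so that in the homogeneous case ($F^{n+1/2}=0$) the error is propagated by the iteration matrix $T=(I-M)^{-1}(I+M)$. Stability will follow once I control $T$, and the whole argument reduces to locating the spectrum of $M$ relative to its Hermitian part.

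First I would isolate the Hermitian part $H:=\tfrac12(M+M^{*})$. Since $M$ is real, $M^{*}=M^{\mathrm{T}}$, and because $B=\mathrm{tridiag}\{-1,0,1\}$ satisfies $B^{\mathrm{T}}=-B$, the entire $B$-term is skew-symmetric and drops out of $H$. Using $l+r=1$, the symmetric part of $l\,A+r\,A^{\mathrm{T}}$ collapses to $\tfrac12(A+A^{\mathrm{T}})=Q$, so that $H=\frac{\tau}{2h^{\alpha}}Q$. This is precisely where Theorem \ref{thmm} is used: under the stated restrictions on $\gamma_{1},\gamma_{2},\gamma_{3}$ the matrix $Q$ is diagonally dominant with strictly negative eigenvalues, hence $H$ is negative definite. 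Invoking Lemma \ref{thm:2}, every eigenvalue $\mu$ of $M$ then satisfies $\mathrm{Re}(\mu)\le\mu_{\max}(H)<0$; in particular $1$ is not an eigenvalue of $M$, so $I-M$ is invertible and $T$ is well defined.

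Next I would pass through the Cayley transform. The eigenvalues of $T$ are exactly the images $\tfrac{1+\mu}{1-\mu}$, and writing $\mu=a+bi$ with $a<0$ gives $|1+\mu|^{2}-|1-\mu|^{2}=4a<0$, so $\bigl|\tfrac{1+\mu}{1-\mu}\bigr|<1$ and $\rho(T)<1$. The delicate point, and the one I expect to be the main obstacle, is that $M$ is \emph{non-normal} (owing to the unsymmetric $A$ and the skew-symmetric $B$), so $\rho(T)<1$ by itself does not bound $\|T^{n}\|$. I would resolve this by noting that $(I-M)^{-1}$ and $(I+M)$ are both polynomials in $M$ and therefore commute, giving $T=(I+M)(I-M)^{-1}$. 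For any $x$ one has the identity $\|(I+M)x\|_{2}^{2}-\|(I-M)x\|_{2}^{2}=4\,x^{*}Hx\le 0$; substituting $y=(I-M)x$ yields $\|Ty\|_{2}\le\|y\|_{2}$, i.e. $\|T\|_{2}\le 1$.

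Consequently $\|U^{n+1}\|_{2}\le\|U^{n}\|_{2}$ in the homogeneous case, uniformly in $\tau$ and $h$, which establishes the unconditional stability of the scheme \eqref{scheme1I}. The genuinely hard work has already been carried out in Theorem \ref{thmm}: it is the sign analysis of the Gr\"unwald-type weights $g_{k}^{(\alpha)}$ in Lemma \ref{ww} that forces the diagonal dominance and negative-definiteness of $Q$. Everything downstream is the standard Cayley-transform contraction argument, and the same reasoning applies verbatim to any admissible choice of $\gamma_{1},\gamma_{2},\gamma_{3}$ in the sets $\mathcal{S}^{\alpha}_{1},\mathcal{S}^{\alpha}_{2},\mathcal{S}^{\alpha}_{3}$ because each guarantees the hypotheses of Theorem \ref{thmm}.
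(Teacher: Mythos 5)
Your proposal is correct and follows the paper's skeleton --- the same splitting $M=\frac{\tau}{2h^{\alpha}}(l\,A+r\,A^{\mathrm{T}})+\frac{\tau\alpha\lambda^{\alpha-1}(r-l)}{4h}B$, the same appeal to Theorem \ref{thmm} together with Lemmas \ref{lem3} and \ref{thm:2} to conclude $\mathrm{Re}(\mu(M))<0$, and the same Cayley-transform observation $\bigl|\frac{1+\mu}{1-\mu}\bigr|<1$ --- but it goes one genuine step further at the end. The paper stops at the statement that the spectral radius of $(I-M)^{-1}(I+M)$ is less than one and declares unconditional stability; for the non-normal matrix $M$ (non-symmetric $A$ plus the skew-symmetric $B$-term, which you correctly note drops out of the Hermitian part because $B^{\mathrm{T}}=-B$ --- the paper even mislabels $B$ as symmetric) a spectral-radius bound is strictly weaker than a uniform bound on powers of the iteration matrix, since the transient growth constant could in principle degrade as $h,\tau\to 0$. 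Your identity $\|(I+M)x\|_{2}^{2}-\|(I-M)x\|_{2}^{2}=4x^{*}Hx\le 0$, combined with the commutation $(I-M)^{-1}(I+M)=(I+M)(I-M)^{-1}$, upgrades this to the genuine operator bound $\|(I-M)^{-1}(I+M)\|_{2}\le 1$ (a Kellogg-type lemma), giving $\|U^{n+1}\|\le\|U^{n}\|$ uniformly in $\tau$ and $h$ and thereby closing the gap. Notably, the paper itself silently needs exactly this norm bound later, in the convergence proof of Theorem \ref{covthm}, where it asserts $\|(I-M)^{-1}(I+M)\|_{h}\le 1$ and $\|(I-M)^{-1}\|_{h}\le 1$ ``with the similar argument''; your computation is what actually justifies those inequalities (the second follows the same way from $\|(I-M)x\|_{2}^{2}=\|x\|_{2}^{2}-2x^{*}Hx+\|Mx\|_{2}^{2}\ge\|x\|_{2}^{2}$, and since $\|\cdot\|_{h}$ is a fixed multiple of the Euclidean norm, the bounds transfer verbatim).
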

\begin{proof}
  Denote $M=\frac{\tau}{2h^{\alpha}}(l~A+r~A^{\mathrm{T}})+\frac{\tau \alpha \lambda^{\alpha-1}(r-l)}{4h}B$. Then the matrix form \eqref{schemem} of the scheme (\ref{scheme1I}) can be rewritten as
  \begin{equation}
    (I-M)U^{n+1}=(I+M)U^{n}+\tau F^{n+1/2}.
  \end{equation}
  If denote $\mu(M)$ as an eigenvalue of matrix $M$, then $\frac{1+\mu(M)}{1-\mu(M)}$ is the eigenvalue of matrix $(I-M)^{-1}(I+M)$. Combining Lemma \ref{lem3}, Lemma \ref{thm:2} and Theorem \ref{thmm} shows that the eigenvalues of matrix $\frac{M+M^{\mathrm{T}}}{2}=\frac{\tau(l+r)}{4h^{\alpha}}(A+A^{\mathrm{T}})=\frac{\tau(l+r)}{4h^{\alpha}}Q$ are negative and $\mathrm{Re(\mu(M))}<0$, which implies that $\big|\frac{1+\mu(M)}{1-\mu(M)}\big|<1$. Therefore, the spectral radius of matrix $(I-M)^{-1}(I+M)$ is less than one; then the numerical scheme \eqref{scheme1I} is unconditionally stable.
\end{proof}
Define $ V_h=\{v: \,v=\{v_{m}\} \text{ is a grid function defined on
} \{x_m=mh\}_{i=1}^{N_{_{x}}-1}\text{~and~}v_{_{0}}=v_{_{N_{x}}}=0\}. $
And we define the corresponding discrete $L^2$-norm $
   \|v\|_{h}=\big(h\sum_{m=1}^{N_{x}-1}v_{m}^2\big)^{1/2}
$ for all $v=\{v_{m}\}\in V_h$.
\begin{theorem}\label{covthm}
Denote $u_j^n$ as the exact solution of problem \eqref{Problem}, and
$U_j^n$ the solution of the numerical scheme \eqref{scheme1I}. Let $\gamma_{1},\gamma_{2}$ and $\gamma_{3}$ be chosen in set
$\mathcal{S}^{\alpha}_{1}(\gamma_1,\gamma_2,\gamma_3)$
 with $\max\big\{\frac{2(\alpha^2+3\alpha-4)}{\alpha^2+3\alpha+2},
\frac{\alpha^2+3\alpha}{\alpha^2+3\alpha+4}\big\}\leq\gamma_1\leq
\frac{3(\alpha^2+3\alpha-2)}{2(\alpha^2+3\alpha+2)},$
 or\, set\,
$\mathcal{S}^{\alpha}_{2}(\gamma_1,\gamma_2,\gamma_3)$ with
$\frac{(\alpha-4)(\alpha^2+3\alpha+2)+24}{2(\alpha^2+3\alpha+2)}\leq\gamma_2
\leq
\min\big\{\frac{(\alpha-2)(\alpha^2+3\alpha+4)+16}{2(\alpha^2+3\alpha+4)},
\frac{(\alpha-6)(\alpha^2+3\alpha+2)+48}{2(\alpha^2+3\alpha+2)}\big\}$,
or set
$\mathcal{S}^{\alpha}_{3}(\gamma_1,\gamma_2,\gamma_3)$ with
$ \max\big\{\frac{(2-\alpha)(\alpha^2+\alpha-8)}{\alpha^2+3\alpha+2},
\frac{(1-\alpha)(\alpha^2+2\alpha)}{2(\alpha^2+3\alpha+4)}\big\}
\leq \gamma_3
\leq\frac{(2-\alpha)(\alpha^2+2\alpha-3)}{2(\alpha^2+3\alpha+2)}$.
Then we get
\begin{equation}
\|u^n-U^n\|_{h}\leq c(\tau^2+h^2),~1\leq n\leq N_{t},
\end{equation}
where $c$ denotes a positive constant and $\|\cdot\|_{h}$ the discrete $L^2$-norm; $u^n$ stands for $(u_1^n,u_2^n,\cdots, u_{N_x-1}^n)^T$.
\end{theorem}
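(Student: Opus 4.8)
The plan is to carry out a standard stability-plus-consistency argument, using the unconditional $L^2$-stability already established in Theorem \ref{thmone} as the engine. First I would introduce the error vector $e^n = u^n - U^n$, where $u^n = (u_1^n,\ldots,u_{N_x-1}^n)^{\mathrm T}$ collects the exact nodal values. Because the solution is assumed sufficiently smooth, the derivation leading to \eqref{scheme1} shows that $u^n$ satisfies the same matrix recursion \eqref{schemem} as $U^n$ up to a local truncation term; that is, with $M = \frac{\tau}{2h^\alpha}(lA+rA^{\mathrm T}) + \frac{\tau\alpha\lambda^{\alpha-1}(r-l)}{4h}B$ one has $(I-M)u^{n+1} = (I+M)u^n + \tau F^{n+1/2} + \rho^n$, where $\rho^n$ is the vector of per-step truncation errors of size $O(\tau^3 + \tau h^2)$ uniformly in the nodes. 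Subtracting the numerical scheme \eqref{scheme1I} and using $e^0 = 0$ (the initial data is taken exactly) yields the clean error recursion $(I-M)e^{n+1} = (I+M)e^n + \rho^n$, i.e. $e^{n+1} = G e^n + (I-M)^{-1}\rho^n$ with amplification matrix $G = (I-M)^{-1}(I+M)$.

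The crucial ingredient, and the step I expect to be the main obstacle, is to upgrade the spectral-radius control of Theorem \ref{thmone} into genuine operator-norm bounds $\|G\|_2 \le 1$ and $\|(I-M)^{-1}\|_2 \le 1$, since accumulating the error over $O(\tau^{-1})$ time levels requires that transient growth in the Euclidean norm be ruled out. Here I would exploit that, by the computation in the proof of Theorem \ref{thmm}, the symmetric part $H = \tfrac12(M+M^{\mathrm T})$ equals a positive multiple of $Q = \tfrac12(A+A^{\mathrm T})$ (the $B$-term is antisymmetric and drops out), and $Q$ is negative definite under the stated ranges of $\gamma_1,\gamma_2,\gamma_3$. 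For real $v$ the identity $\|(I+M)v\|_2^2 - \|(I-M)v\|_2^2 = 4\,v^{\mathrm T} H v \le 0$ then gives $\|G\|_2 \le 1$ at once, while $v^{\mathrm T}(I-M)v = v^{\mathrm T}(I-H)v \ge \|v\|_2^2$ forces $\|(I-M)v\|_2 \ge \|v\|_2$ and hence $\|(I-M)^{-1}\|_2 \le 1$.

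With these bounds in hand the remainder is routine. Taking Euclidean norms in the recursion gives $\|e^{n+1}\|_2 \le \|e^n\|_2 + \|\rho^n\|_2$, and multiplying through by $\sqrt h$ (so that $\|\cdot\|_h = \sqrt h\,\|\cdot\|_2$) transfers the estimate to the discrete $L^2$-norm: $\|e^{n+1}\|_h \le \|e^n\|_h + \|\rho^n\|_h$. Iterating from $e^0=0$ yields $\|e^n\|_h \le \sum_{k=0}^{n-1}\|\rho^k\|_h$. Since each $\rho^k$ has entries of size $O(\tau^3+\tau h^2)$ and there are $N_x-1 = O(h^{-1})$ of them, the weight $h$ in the norm gives $\|\rho^k\|_h = O(\tau^3 + \tau h^2)$; summing over at most $N_t = T/\tau$ levels produces $\|e^n\|_h \le N_t\cdot O(\tau^3+\tau h^2) = O(\tau^2 + h^2)$, which is exactly the claimed bound $\|u^n - U^n\|_h \le c(\tau^2 + h^2)$. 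The only points needing care beyond the norm estimate are confirming that the smoothness hypothesis makes the truncation vector genuinely $O(\tau^3+\tau h^2)$ (following the consistency analysis behind \eqref{scheme1}), and that the boundary contributions collected in $F^{n+1/2}$ cancel in the subtraction because $u$ and $U$ share the same boundary data.
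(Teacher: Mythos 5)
Your proposal is correct and follows the same stability-plus-consistency skeleton as the paper's proof: the same error recursion $(I-M)E^{n+1}=(I+M)E^{n}+\tau\rho^{n}$, the same two operator bounds $\|(I-M)^{-1}(I+M)\|\le1$ and $\|(I-M)^{-1}\|\le1$, and the same telescoping of truncation errors of size $O(\tau^{3}+\tau h^{2})$ over $N_t=T/\tau$ levels. The genuine difference is in how the two norm bounds are obtained, and here your version is actually stronger than the paper's. The paper simply asserts them ``with the similar argument presented in Theorem \ref{thmone}''; but Theorem \ref{thmone} only controls the \emph{spectral radius} of $(I-M)^{-1}(I+M)$ via eigenvalue real parts, and since $M$ is non-normal (the skew part $\frac{\tau\alpha\lambda^{\alpha-1}(r-l)}{4h}B$ does not commute with the rest), a spectral radius below one does not by itself yield a Euclidean operator norm at most one. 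Your energy identity $\|(I+M)v\|_2^2-\|(I-M)v\|_2^2=4\,v^{\mathrm{T}}Hv\le0$, together with $v^{\mathrm{T}}(I-M)v=v^{\mathrm{T}}(I-H)v\ge\|v\|_2^2$ and the fact that $H=\frac{1}{2}(M+M^{\mathrm{T}})$ is a positive multiple of $Q$ (the $B$-contribution being skew-symmetric and dropping out, exactly as in the computation inside Theorem \ref{thmone}), supplies the missing justification and is the argument the paper implicitly needs; note also that you correctly treat $B=\mathrm{tridiag}\{-1,0,1\}$ as skew-symmetric even though the paper mislabels it ``symmetric.'' Your bookkeeping of the truncation term is likewise cleaner and internally consistent, whereas the paper wavers between $\rho_j^n=O(\tau^{3}+\tau h^{2})$, $|\rho_j^n|\le\tilde c(\tau^{2}+h^{2})$, and $|\rho_j^n|\le C\tau(\tau^{2}+h^{2})$; the scaling $\|\cdot\|_h=\sqrt{h}\,\|\cdot\|_2$ transfers everything to the discrete $L^2$-norm as you say, and the final sum over time levels gives the claimed $O(\tau^{2}+h^{2})$ bound. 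In short: same route as the paper, but with the one delicate step made rigorous rather than asserted.
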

\begin{proof}
Let $e_j^n=u_j^n-U_j^n$. Combining \eqref{scheme1} and \eqref{scheme1I} leads to
  \begin{equation}\label{p1}
    (I-M)E^{n+1}=(I+M)E^{n}+\tau
    \rho^n,
  \end{equation}
  where
  \begin{equation*}
    E^n=\Big(u_1^n-U_1^n,u_2^n-U_2^n,\cdots,u_{N_{x}-1}^n-U_{N_{x}-1}^n\Big)^{\mathrm{T}},\,
    \rho^n=\Big(\rho_1^n,\rho_2^n,\cdots,\rho_{N_{x}-1}^n\Big)^{\mathrm{T}},
  \end{equation*}
and $\rho_j^n=O(\tau^3+\tau h^2)$ is the local truncation error.
The above equation can be rewritten as
  \begin{equation*}
  E^{n+1}=(I-M)^{-1}(I+M)e^n+(I-M)^{-1}\rho^{n},
\end{equation*}
Taking the discrete $L^2$-norm on both sides of the above equation leads to
   \begin{equation*}
  ||E^{n}||_{h}\leq ||(I-M)^{-1}(I+M)E^{n-1}||_{h}+ ||(I-M)^{-1}\rho^{n}||_{h}.
\end{equation*}
  Noting that $|\rho_j^n|\leq \tilde{c}(\tau^2+h^2)$ and with the similar argument presented in Theorem \ref{thmone}, we can prove that
 \begin{equation*}
 ||(I-M)^{-1}(I+M)||_{h}\leq 1,~||(I-M)^{-1}||_{h}\leq 1.
\end{equation*}
Therefore, we further find that
\begin{equation*}
 ||E^{n}||_{h}\leq ||(I-M)^{-1}(I+M)E^{n-1}||_{h}+ ||\rho^{n}||_{h} \leq  ||E^{n-1}||_{h}+||\rho^{n}||_{h}.
\end{equation*}
Since the truncation error gives
  $|\rho_j^{n}|\leq C \tau(\tau^2+h^2)$ , we conclude that
\begin{equation*}
 ||E^{n}||_{h}\leq ||E^{n-1}||_{h}+||\rho^{n}||_{h}\leq\sum_{k=1}^{n}||\rho^{k}||_{h}\leq
 C (\tau^2+h^2).
\end{equation*}
\end{proof}
\section{Numerical results}\label{sec:4}
In this section, we perform the numerical experiments to verify the
approximation orders of the tempered-WSGD operators to the tempered
fractional calculus in Example 1; in Examples 2 and 3, to show the
powerfulness of the presented Crank-Nicolson-tempered-WSGD schemes
for the tempered fractional diffusion equations with the left
tempered fractional derivative and the right tempered fractional
derivative, respectively; in particular, the desired convergence
orders of the Crank-Nicolson-tempered-WSGD schemes are carefully
confirmed.

\begin{example}\label{example0}
We numerically test the approximation accuracy of the tempered-WSGD
operators to the left and right Riemann-Liouville tempered
fractional derivatives; and also the approximation
accuracy of the corresponding tempered-WSGD operators to the left
and right Riemann-Liouville tempered fractional integrals. Using
  \begin{equation*}\label{eq:5a}
  {_a}D_x^{\alpha}\big[(x-a)^{\mu}\big]=\frac{\Gamma(\mu+1)}{\Gamma(\mu-\alpha+1)}(x-a)^{\mu-\alpha},
 ~~ {_x}D_b^{\alpha}\big[(b-x)^{\mu}\big]=\frac{\Gamma(\mu+1)}{\Gamma(\mu-\alpha+1)}(b-x)^{\mu-\alpha},
  \end{equation*}
  \begin{equation*}\label{eq:5c}
  {_a}D_x^{-\sigma}\big[(x-a)^{\mu}\big]=\frac{\Gamma(\mu+1)}{\Gamma(\mu+\sigma+1)}(x-a)^{\mu+\sigma},~~
  {_x}D_b^{-\sigma}\big[(b-x)^{\mu}\big]=\frac{\Gamma(\mu+1)}{\Gamma(\mu+\sigma+1)}(b-x)^{\mu+\sigma},
  \end{equation*}
we obtain the analytical/exact results
  \begin{equation*}\label{eq:5ac}
  {_a}D_x^{\alpha,\lambda}\big[e^{-\lambda x}(x-a)^{\mu}\big]=\frac{\Gamma(\mu+1)}{\Gamma(\mu-\alpha+1)}e^{-\lambda x}(x-a)^{\mu-\alpha},
    ~~
  {_x}D_b^{\alpha,\lambda}\big[e^{\lambda x}(b-x)^{\mu}\big]=\frac{\Gamma(\mu+1)}{\Gamma(\mu-\alpha+1)}e^{\lambda x}(b-x)^{\mu-\alpha},
  \end{equation*}
  \begin{equation*}\label{eq:5cc}
  {_a}D_x^{-\sigma,\lambda}\big[e^{-\lambda x}(x-a)^{\mu}\big]=\frac{\Gamma(\mu+1)}{\Gamma(\mu+\sigma+1)}e^{-\lambda x}(x-a)^{\mu+\sigma},
    ~~
  {_x}D_b^{-\sigma,\lambda}\big[e^{\lambda x}(b-x)^{\mu}\big]=\frac{\Gamma(\mu+1)}{\Gamma(\mu+\sigma+1)}e^{\lambda
  x}(b-x)^{\mu+\sigma},
  \end{equation*}
  where $\mu>-1$.
\end{example}
The numerical values are computed in the finite interval $[0,1]$;
the numerical errors and orders of accuracy are shown in Tables
\ref{tabexm1a}-\ref{tabexm2a6}, which confirm the desired second
order accuracy.
\begin{table}[H]
  \begin{center}
  \caption{Numerical errors and orders of accuracy for $_0D_x^{\alpha,\lambda}(e^{-\lambda x}x^{2+\alpha})=\frac{\Gamma(3+\alpha)}{2}e^{-\lambda x}x^2$ computed by the tempered-WSGD operators (\ref{regleq2.7ww1})  for different $\lambda$  in the interval $[0,1]$ with fixed $\alpha=1.6$ and the $(\gamma_{1},\gamma_{2},\gamma_{3})$ are selected in set $\mathcal{S}^{\alpha}_{3}(\gamma_{1},\gamma_{2},\gamma_{3})$ with $\gamma_{3}=0.001$.}\vspace{5pt}
\begin{tabular}{|l|l|l|l|l|l|l|}
\hline
& \multicolumn{2}{c|}{$\lambda=0$} & \multicolumn{2}{c|}{$\lambda=1$} & \multicolumn{2}{c|}{$\lambda=10$}\\
\cline{2-3} \cline{4-5}\cline{6-7}
$h$ &  $\|e\|_{h}$-error  & order & $\|e\|_{h}$-error & order & $\|e\|_{h}$-error & order\\
\hline
   1/10   & 3.63e-03 &       &2.49e-03 &     &4.57e-04 & \\
   1/20   & 9.02e-04 & 2.01  &6.15e-04 & 2.02&1.22e-04 & 1.90 \\
   1/40   & 2.25e-04 & 2.00  &1.53e-04 & 2.01&3.04e-05 & 2.01 \\
   1/80   & 5.62e-05 & 2.00  &3.82e-05 & 2.00&7.54e-06 & 2.01 \\
\hline
\end{tabular}\label{tabexm1a}
\end{center}
\end{table}
\begin{table}[H]
  \begin{center}
  \caption{Numerical errors and orders of accuracy for $_xD_1^{\alpha,\lambda}(e^{\lambda x}(1-x)^{2+\alpha})=\frac{\Gamma(3+\alpha)}{2}e^{\lambda x}(1-x)^2$ computed by the tempered-WSGD operators (\ref{regleq2.7ww2}) for different $\lambda$ in the interval $[0,1]$ with fixed $\alpha=1.6$ and the $(\gamma_{1},\gamma_{2},\gamma_{3})$ are selected in set $\mathcal{S}^{\alpha}_{3}(\gamma_{1},\gamma_{2},\gamma_{3})$ with $\gamma_{3}=0$.}\vspace{5pt}
  \begin{tabular}{|l|l|l|l|l|l|l|}
\hline
& \multicolumn{2}{c|}{$\lambda=0$} & \multicolumn{2}{c|}{$\lambda=1$} & \multicolumn{2}{c|}{$\lambda=10$}\\
\cline{2-3} \cline{4-5}\cline{6-7}
$h$ &  $\|e\|_{h}$-error  & order & $\|e\|_{h}$-error & order & $\|e\|_{h}$-error & order\\
\hline
   1/10   & 3.63e-03 &       &6.75e-03 &     &1.01e+01 & \\
   1/20   & 9.02e-04 & 2.01  &1.67e-03 & 2.01&2.69e+00 & 1.90 \\
   1/40   & 2.25e-04 & 2.00  &4.16e-04 & 2.01&6.69e-01 & 2.01 \\
   1/80   & 5.62e-05 & 2.00  &1.04e-05 & 2.00&1.66e-01 & 2.01 \\
\hline
\end{tabular}\label{tabexm2a}
\end{center}
\end{table}
\begin{table}[H]
  \begin{center}\label{tabexm3a}
  \caption{Numerical errors and orders of accuracy for $_0D_x^{-\sigma,\lambda}(e^{-\lambda x}x^{1+\sigma})=\frac{\Gamma(2+\sigma)}{\Gamma(2+2\sigma)}e^{-\lambda x}x^{1+2\sigma}$  computed by the tempered-WSGD operators (replacing $\alpha$ by $-\sigma$ in  (\ref{regleq2.7ww1})) for different $\lambda$ in the interval $[0,1]$ with fixed $\sigma=0.6$ and the $(\gamma_{1},\gamma_{2},\gamma_{3})$ are selected in set $\mathcal{S}^{\sigma}_{3}(\gamma_{1},\gamma_{2},\gamma_{3})$ with $\gamma_{3}=0.04$.}\vspace{5pt}
  \begin{tabular}{|l|l|l|l|l|l|l|}
\hline
& \multicolumn{2}{c|}{$\lambda=0$} & \multicolumn{2}{c|}{$\lambda=2$} & \multicolumn{2}{c|}{$\lambda=5$}\\
\cline{2-3} \cline{4-5}\cline{6-7}
$h$ &  $\|e\|_{h}$-error  & order & $\|e\|_{h}$-error & order & $\|e\|_{h}$-error & order\\
\hline
   1/10   & 3.48e-03 &       &2.43e-03 &     &1.68e-03 & \\
   1/20   & 8.88e-04 & 1.97  &6.71e-04 & 1.86&5.30e-04 & 1.66 \\
   1/40   & 2.25e-04 & 1.98  &1.77e-04 & 1.92&1.49e-04 & 1.83 \\
   1/80   & 5.69e-05 & 1.98  &4.56e-05 & 1.96&3.98e-05 & 1.91 \\
\hline
\end{tabular}
\end{center}
\end{table}
\begin{table}[H]
  \begin{center}\label{tabexm4}
  \caption{Numerical errors and orders of accuracy for $_xD_1^{-\sigma,\lambda}(e^{\lambda x}(1-x)^{1+\sigma})=\frac{\Gamma(2+\sigma)}{\Gamma(2+2\sigma)}e^{\lambda x}(1-x)^{1+2\sigma}$ computed by the tempered-WSGD operators (replacing $\alpha$ by $-\sigma$ in  (\ref{regleq2.7ww2})) for different $\lambda$ in the interval $[0,1]$ with fixed $\sigma=0.6$ and the $(\gamma_{1},\gamma_{2},\gamma_{3})$ are selected in set $\mathcal{S}^{\sigma}_{3}(\gamma_{1},\gamma_{2},\gamma_{3})$ with $\gamma_{3}=-0.01$.}\vspace{5pt}
  \begin{tabular}{|l|l|l|l|l|l|l|}
\hline
& \multicolumn{2}{c|}{$\lambda=0$} & \multicolumn{2}{c|}{$\lambda=2$} & \multicolumn{2}{c|}{$\lambda=5$}\\
\cline{2-3} \cline{4-5}\cline{6-7}
$h$ &  $\|e\|_{h}$-error  & order & $\|e\|_{h}$-error & order & $\|e\|_{h}$-error & order\\
\hline
   1/10   & 4.35e-03 &       &2.22e-02 &     &3.05e-01 & \\
   1/20   & 1.11e-04 & 1.96  &6.16e-03 & 1.85&9.71e-02 & 1.66 \\
   1/40   & 2.84e-04 & 1.97  &1.63e-03 & 1.92&2.75e-02 & 1.82 \\
   1/80   & 7.18e-05 & 1.98  &4.22e-04 & 1.95&7.37e-03 & 1.90 \\
\hline
\end{tabular}
\end{center}
\end{table}
\begin{table}[H]
\centering
\caption{Numerical errors and orders of accuracy for $_0D_x^{\alpha,\lambda}(e^{-\lambda x}x^{2+\alpha})-\lambda^{\alpha}(e^{-\lambda x}x^{2+\alpha})$ computed by the tempered-WSGD operators (\ref{eq2.7ww1})  for different $\lambda$  in the interval $[0,1]$ and the $(\gamma_{1},\gamma_{2},\gamma_{3})$ are selected in set $\mathcal{S}^{\alpha}_{3}(\gamma_{1},\gamma_{2},\gamma_{3})$ with $\gamma_{3}=0.02$.}
\vspace{0.2cm}
  \begin{tabular}{|c|c|c|c|c|c|c|c|c|}
\hline
&  &\multicolumn{2}{c|}{$\lambda=0$} & \multicolumn{2}{c|}{$\lambda=1$} & \multicolumn{2}{c|}{$\lambda=10$}\\
\cline{3-4} \cline{4-5}\cline{6-8}
$\alpha$& $h$ &$\|e\|_{h}$-error  & order & $\|e\|_{h}$-error & order & $\|e\|_{h}$-error & order\\
\hline
   & 1/10  & 3.56e-03 &      &4.84e-03 &       &6.64e-04 & \\
   & 1/20  & 8.91e-04 & 2.00 &1.15e-03 & 2.08  &2.27e-04 & 1.55   \\
 $\alpha=0.5$
   & 1/40  & 2.23e-04 & 2.00 &2.85e-04 & 2.01  &6.15e-05 & 1.88\\
   & 1/80  & 5.57e-05 & 2.00 &7.11e-05 & 2.00  &1.47e-05 & 2.06 \\ \hline
   & 1/10  & 4.53e-03 &      &2.54e-03 &       &1.19e-04 & \\
 $\alpha=1.5$
   & 1/20  & 1.12e-03 & 2.02 &6.28e-04 & 2.02  &3.80e-05 & 1.65   \\
   & 1/40  & 2.79e-04 & 2.01 &1.56e-04 & 2.01  &1.05e-05 & 1.86\\
   & 1/80  & 6.96e-05 & 2.00 &3.90e-05 & 2.00  &2.72e-06 & 1.94 \\ \hline
  \end{tabular}
\label{tabexm1a5}
\end{table}
\begin{table}[H]
\centering
\caption{Numerical errors and orders of accuracy for $_xD_1^{\alpha,\lambda}(e^{\lambda x}(1-x)^{2+\alpha})-\lambda^\alpha(e^{\lambda x}(1-x)^{2+\alpha})$ computed by the tempered-WSGD operators (\ref{eq2.7ww2}) for different $\lambda$ in the interval $[0,1]$ and the $(\gamma_{1},\gamma_{2},\gamma_{3})$ are selected in set $\mathcal{S}^{\alpha}_{3}(\gamma_{1},\gamma_{2},\gamma_{3})$ with $\gamma_{3}=-0.02$.}
\vspace{0.2cm}
  \begin{tabular}{|c|c|c|c|c|c|c|c|c|}
\hline
&  &\multicolumn{2}{c|}{$\lambda=0$} & \multicolumn{2}{c|}{$\lambda=1$} & \multicolumn{2}{c|}{$\lambda=10$}\\
\cline{3-4} \cline{4-5}\cline{6-8}
$\alpha$& $h$ &$\|e\|_{h}$-error  & order & $\|e\|_{h}$-error & order & $\|e\|_{h}$-error & order\\
\hline
   & 1/10  & 2.45e-03 &      &8.62e-03 &       &1.30e+01 & \\
 $\alpha=0.5$
   & 1/20  & 6.19e-04 & 1.98 &2.15e-03 & 2.01  &3.97e+00 & 1.71   \\
   & 1/40  & 1.56e-04 & 1.99 &5.39e-04 & 1.99  &8.51e-01 & 2.22  \\
   & 1/80  & 3.91e-05 & 2.00 &1.35e-04 & 1.99  &2.23e-01 & 1.94 \\ \hline
   & 1/10  & 3.51e-03 &      &5.38e-03   &     &2.40e+00 & \\
 $\alpha=1.5$
   & 1/20  & 8.65e-04 & 2.02 &1.32e-03   & 2.03&7.14e-01 & 1.75   \\
   & 1/40  & 2.15e-04 & 2.01 &3.28e-04   & 2.01&1.88e-01 & 1.93\\
   & 1/80  & 5.38e-05 & 2.00 &8.19e-05   & 2.00&4.75e-02 & 1.98 \\ \hline
  \end{tabular}
\label{tabexm2a6}
\end{table}
\begin{example}\label{example1}
We consider the following tempered fractional diffusion equation
with the left tempered fractional derivative
 \begin{equation}\label{eq:55.1}
\begin{aligned}
       \frac{\partial u(x,t)}{\partial t}=&{_0}\mathbf{D}_x^{\alpha,\lambda}u(x,t)+e^{-\lambda x-t}\big((\lambda^{\alpha}-\alpha \lambda^{\alpha}-1)x^{1+\alpha}-\Gamma(2+\alpha)x\\
      &+\alpha(\alpha+1)\lambda^{\alpha-1}x^{\alpha}\big),\quad (x,t)\in (0,1)\times(0,1],\quad 1<\alpha<2,
\end{aligned}
\end{equation}
with  the boundary conditions
  \begin{equation*}
    u(0,t)=0,\ \ u(1,t)=e^{-\lambda-t}, \quad t\in [0, 1],
  \end{equation*}
  and the initial value
  \begin{equation*}
    u(x,0)=e^{-\lambda x}x^{1+\alpha}, \quad x\in [0, 1].
  \end{equation*}
We can check that the exact
solution of \eqref{eq:55.1} is $u(x,t)=e^{-\lambda
x-t}x^{1+\alpha}$.
\end{example}
Eq. \eqref{eq:55.1} is solved by the Crank-Nicolson-tempered-WSGD
scheme \eqref{scheme1I}; and the numerical results are collected in
Tables \ref{tab1}-\ref{tab3}. It can be seen
that the numerical results with second order accuracy are
obtained.
\begin{table}[H]
\caption{Numerical errors and orders of accuracy for Example
\ref{example1} computed by the Crank-Nicolson-tempered-WSGD
   schemes \eqref{scheme1I} at $t=1$ with different weights and the fixed stepsizes $\tau=h,\,\lambda=2.0 , \,\alpha=1.6$ and the parameters
   $(\gamma_{1},\gamma_{2},\gamma_{3})$ are selected in set $\mathcal{S}^{\alpha}_{1}(\gamma_{1},\gamma_{2},\gamma_{3})$. } \label{table.Ex2xi23}
\begin{center}
\begin{tabular}{|l|l|l|l|l|l|l|}
\hline
& \multicolumn{2}{c|}{$\gamma_{1}=0.7$} & \multicolumn{2}{c|}{$\gamma_{1}=0.75$} & \multicolumn{2}{c|}{$\gamma_{1}=0.8$}\\
\cline{2-3} \cline{4-5}\cline{6-7}
$h$ &  $\|e\|_{h}$-error  & order & $\|e\|_{h}$-error & order & $\|e\|_{h}$-error & order\\
\hline
   1/10   & 4.64e-04 &       &4.79e-04 &     &4.98e-04 & \\
   1/20   & 1.30e-04 & 1.84  &1.27e-04 & 1.92&1.25e-04 & 2.00 \\
   1/40   & 3.46e-05 & 1.91  &3.26e-05 & 1.96&3.08e-05 & 2.02 \\
   1/80   & 8.92e-06 & 1.95  &8.27e-06 & 1.98&7.63e-06 & 2.01 \\
\hline
\end{tabular}
\end{center}\label{tab1}
\end{table}
\begin{table}[H]
\caption{Numerical errors and orders of accuracy for Example
\ref{example1} computed by the Crank-Nicolson-tempered-WSGD
   schemes \eqref{scheme1I} at $t=1$ with different weights and the fixed stepsizes $\tau=h,\,\lambda=2.0, \,\alpha=1.6$ and the parameters
   $(\gamma_{1},\gamma_{2},\gamma_{3})$ are selected in set $\mathcal{S}^{\alpha}_{2}(\gamma_{1},\gamma_{2},\gamma_{3})$. } \label{table.Ex2xi23}
\begin{center}
\begin{tabular}{|l|l|l|l|l|l|l|}
\hline
& \multicolumn{2}{c|}{$\gamma_{2}=0.2$} & \multicolumn{2}{c|}{$\gamma_{2}=0.3$} & \multicolumn{2}{c|}{$\gamma_{2}=0.4$}\\
\cline{2-3} \cline{4-5}\cline{6-7}
$h$ &  $\|e\|_{h}$-error  & order & $\|e\|_{h}$-error & order & $\|e\|_{h}$-error & order\\
\hline
   1/10   & 4.98e-04 &       &4.79e-04 &      &4.64e-04 & \\
   1/20   & 1.25e-04 & 2.00  &1.27e-04 & 1.92 &1.30e-04 & 1.84 \\
   1/40   & 3.07e-05 & 2.02  &3.26e-05 & 1.96 &3.46e-05 & 1.91 \\
   1/80   & 7.62e-06 & 2.01  &8.27e-06 & 1.98 &8.92e-06 & 1.96 \\
\hline
\end{tabular}
\end{center}\label{tab2}
\end{table}
\begin{table}[H]
\caption{Numerical errors and orders of accuracy for Example
\ref{example1} computed by the Crank-Nicolson-tempered-WSGD
   schemes \eqref{scheme1I} at $t=1$ with different weights and the fixed stepsizes $\tau=h,\,\lambda=2, \,\alpha=1.6$ and the parameters
   $(\gamma_{1},\gamma_{2},\gamma_{3})$ are selected in set  $\mathcal{S}^{\alpha}_{3}(\gamma_{1},\gamma_{2},\gamma_{3})$. } \label{table.Ex2xi23}
\begin{center}
\begin{tabular}{|l|l|l|l|l|l|l|}
\hline
& \multicolumn{2}{c|}{$\gamma_{3}=-0.04$} & \multicolumn{2}{c|}{$\gamma_{3}=0$} & \multicolumn{2}{c|}{$\gamma_{3}=0.04$}\\
\cline{2-3} \cline{4-5}\cline{6-7}
$h$ &  $\|e\|_{h}$-error  & order & $\|e\|_{h}$-error & order & $\|e\|_{h}$-error & order\\
\hline
   1/10   & 4.82e-04 &       &4.98e-04 &     &5.16e-04 & \\
   1/20   & 1.26e-04 & 1.93  &1.25e-04 & 2.00&1.23e-04 & 2.07 \\
   1/40   & 3.22e-05 & 1.97  &3.08e-05 & 2.02&2.94e-05 & 2.07 \\
   1/80   & 8.14e-06 & 1.99  &7.63e-06 & 2.01&7.13e-06 & 2.04 \\
\hline
\end{tabular}
\end{center}\label{tab3}
\end{table}

\begin{example}\label{example2}
Finally, we consider the following tempered fractional diffusion
equation with the right tempered fractional derivative
  \begin{equation}\label{eq:5.1}
\begin{aligned}
       \frac{\partial u(x,t)}{\partial t}=&{_x}\mathbf{D}_b^{\alpha,\lambda}u(x,t)+e^{\lambda x-t}\big((\lambda^{\alpha}-\alpha\lambda^{\alpha}-1)(1-x)^{1+\alpha}-\Gamma(2+\alpha)(1-x)\\
      &+\alpha(\alpha+1)\lambda^{\alpha-1}(1-x)^{\alpha}\big), \quad (x,t)\in (0,1)\times(0,1],,\quad 1<\alpha<2,
\end{aligned}
\end{equation}
  with the boundary conditions
  \begin{equation*}
    u(0,t)=e^{\lambda x}(1-x)^{1+\alpha},\ \ u(1,t)=0, \quad t\in [0, 1],
  \end{equation*}
  and the initial value
  \begin{equation*}
    u(x,0)=e^{\lambda x}(1-x)^{1+\alpha}, \quad x\in [0, 1].
  \end{equation*}
With the help of the formulae given in Example \ref{example0},
 we get the exact solution of \eqref{eq:5.1}: $u(x,t)=e^{\lambda x-t}(1-x)^{1+\alpha}$.
\end{example}
Tables \ref{tab4}-\ref{tab6} present the numerical errors and the
convergence behaviors of the Crank-Nicolson-tempered-WSGD schemes
\eqref{scheme1I}. These confirm the results given in Theorem
\ref{covthm}.
\begin{table}[H]
\caption{Numerical errors and orders of accuracy for Example
\ref{example2} computed by the Crank-Nicolson-tempered-WSGD
   schemes \eqref{scheme1I} at $t=1$ with different weights and the fixed stepsizes $\tau=h,\,\lambda=1.0, \,\alpha=1.2$ and the parameters
   $(\gamma_{1},\gamma_{2},\gamma_{3})$ are selected in set $\mathcal{S}^{\alpha}_{1}(\gamma_{1},\gamma_{2},\gamma_{3})$. } \label{table.Ex2xi23}
\begin{center}
\begin{tabular}{|l|l|l|l|l|l|l|}
\hline
& \multicolumn{2}{c|}{$\gamma_{1}=0.7$} & \multicolumn{2}{c|}{$\gamma_{1}=0.75$} & \multicolumn{2}{c|}{$\gamma_{1}=0.8$}\\
\cline{2-3} \cline{4-5}\cline{6-7}
$h$ &  $\|e\|_{h}$-error  & order & $\|e\|_{h}$-error & order & $\|e\|_{h}$-error & order\\
\hline
   1/10   & 3.94e-03 &       &4.18e-03 &     &4.43e-03 & \\
   1/20   & 9.22e-04 & 2.09  &9.53e-04 & 2.14&9.85e-04 & 2.17\\
   1/40   & 2.18e-04 & 2.07  &2.20e-04 & 2.12&2.22e-04 & 2.16 \\
   1/80   & 5.30e-05 & 2.04  &5.25e-05 & 2.06&5.21e-05 & 2.10 \\
\hline
\end{tabular}
\end{center}\label{tab4}
\end{table}
\begin{table}[H]
\caption{Numerical errors and orders of accuracy for Example
\ref{example2} computed by the Crank-Nicolson-tempered-WSGD
   schemes \eqref{scheme1I} at $t=1$ with different weights and the fixed stepsizes $\tau=h,\,\lambda=1.0, \,\alpha=1.2$ and the parameters
   $(\gamma_{1},\gamma_{2},\gamma_{3})$ are selected in set $\mathcal{S}^{\alpha}_{2}(\gamma_{1},\gamma_{2},\gamma_{3})$. } \label{table.Ex2xi23}
\begin{center}
\begin{tabular}{|l|l|l|l|l|l|l|}
\hline
& \multicolumn{2}{c|}{$\gamma_{2}=0.2$} & \multicolumn{2}{c|}{$\gamma_{2}=0.3$} & \multicolumn{2}{c|}{$\gamma_{2}=0.4$}\\
\cline{2-3} \cline{4-5}\cline{6-7}
$h$ &  $\|e\|_{h}$-error  & order & $\|e\|_{h}$-error & order & $\|e\|_{h}$-error & order\\
\hline
   1/10   & 3.94e-03 &       &3.69e-03 &     &3.46e-03 & \\
   1/20   & 9.22e-04 & 2.09  &8.95e-04 & 2.05&8.70e-04 & 1.99 \\
   1/40   & 2.18e-04 & 2.07  &2.17e-04 & 2.04&2.17e-04 & 2.01 \\
   1/80   & 5.29e-05 & 2.04  &5.35e-05 & 2.02&5.40e-05 & 2.00 \\
\hline
\end{tabular}
\end{center}\label{tab5}
\end{table}
\begin{table}[H]
\caption{Numerical errors and orders of accuracy for Example
\ref{example2} computed by the Crank-Nicolson-tempered-WSGD
   schemes \eqref{scheme1I} at $t=1$ with different weights and the fixed stepsizes $\tau=h,\,\lambda=1.0, \,\alpha=1.2$ and the parameters
   $(\gamma_{1},\gamma_{2},\gamma_{3})$ are selected in set $\mathcal{S}^{\alpha}_{3}(\gamma_{1},\gamma_{2},\gamma_{3})$. } \label{table.Ex2xi23}
\begin{center}
\begin{tabular}{|l|l|l|l|l|l|l|}
\hline
& \multicolumn{2}{c|}{$\gamma_{3}=-0.04$} & \multicolumn{2}{c|}{$\gamma_{3}=0$} & \multicolumn{2}{c|}{$\gamma_{3}=0.04$}\\
\cline{2-3} \cline{4-5}\cline{6-7}
$h$ &  $\|e\|_{h}$-error  & order & $\|e\|_{h}$-error & order & $\|e\|_{h}$-error & order\\
\hline
   1/10   & 3.29e-03 &       &3.46e-03 &     &3.65e-03 & \\
   1/20   & 8.53e-04 & 1.95  &8.70e-04 & 1.99&8.89e-04 & 2.04 \\
   1/40   & 2.16e-04 & 1.98  &2.17e-04 & 2.00&2.17e-04& 2.03 \\
   1/80   & 5.45e-05 & 1.99  &5.40e-05 & 2.00&5.36e-05 & 2.02 \\
\hline
\end{tabular}
\end{center}\label{tab6}
\end{table}
\section{Concluding remarks}\label{sec:6}
L\'{e}vy flight models suppose that the particles have very large
jumps; and they have infinite moments. But many realistically
non-Brownian (at least converge to the Brownian ultraslowly and it
is not possible to observe the Brownian behaviors in the finite
observing time) physical processes just lie in the bounded physical
domain. So some techniques to modify the L\'{e}vy flight models are
introduced. The most popular one seems to be exponentially tempering
the probability of large jumps of L\'{e}vy flight, which leads to
the tempered fractional diffusion equation being used to describe
the probability density function of the positions of the particles.
With this model, the tempered fractional calculus are introduced;
they are very similar to but still different from the fractional
substantial calculus. The fractional substantial calculus are
time-space coupled operators; and their discretizations are in the
time direction. The tempered fractional derivative used in this
paper is a space operator without coupling with time. On one hand,
we need to derive its high order discretizations, which can greatly
improve the accuracy but without introducing new computational cost
comparing with the first order scheme; on the other hand, the
numerical stability of the derived schemes is a key issue.

This paper derive a series of high order discetizations for the
tempered fractional calculus, including the left Riemann-Liouville
tempered fractional derivative and integral and the right
Riemann-Liouville tempered fractional derivative and integral. In
particular, the superconvergent point still exists for the first
order discretization of left/right Riemann-Liouville tempered
fractional derivative/integral. The stability domains of the schemes are analytically derived and clearly illustrated in figures. A family of
second order schemes are used to numerically solve the tempered
fractional diffusion equation. And the stability and convergence of
the numerical schemes are theoretically proved and numerically
verified.
\section*{Acknowledgements}
We would like to thank the anonymous referees for their careful reading of this paper and their many valuable comments and  suggestions for improving the presentation of this work. This research was partially supported by the
National Natural Science Foundation of China under Grant  No. 11271173, the Starting Research Fund from the Xi'an University of Technology under Grant No. 108-211206 and the Scientific Research Program Funded by Shaanxi Provincial Education Department under Grant No. 2013JK0581.


  \end{document}